\documentclass[11pt]{article}

\usepackage{amsmath,amssymb,amsthm}
\usepackage{amsfonts}
\usepackage{multicol}

\usepackage{color}

\newtheorem{theorem}{Theorem}
\newtheorem{lemma}[theorem]{Lemma}
\newtheorem{prop}[theorem]{Proposition}
\newtheorem{corollary}[theorem]{Corollary}
\newtheorem{e-definition}[theorem]{Definition\rm}
\newtheorem{remark}{\bf Remark\/}

\newenvironment{definition}[1][Definition]{\begin{trivlist}
\item[\hskip \labelsep {\bfseries #1}]}{\end{trivlist}}

\usepackage{color}

\title{Market-consistent valuation of insurance liabilities \\ by cost of capital}
\author{Christoph M\"ohr}
\date{\today}

\addtolength{\textwidth}{2.3cm}
\addtolength{\hoffset}{-1.4cm}

\begin{document}

\maketitle

\begin{abstract}
This paper investigates market-consistent valuation of insurance liabilities in the context of, for instance, Solvency II and to some extent IFRS 4. We propose an explicit and consistent framework for the valuation of insurance liabilities which incorporates the Solvency II approach as a special case.

The proposed framework is based on dynamic replication over multiple (one-year) time periods by a portfolio of assets with reliable market prices, allowing for "limited liability" in the sense that the replication can in general not always be continued. The asset portfolio consist of two parts: (1) assets whose market price defines the value of the insurance liabilities, and (2) capital funds used to cover risk which cannot be replicated. The capital funds give rise to capital costs; the main exogenous input of the framework is the condition on when the investment of the capital funds is acceptable.

We investigate existence of the value and show that the exact calculation of the value has to be done recursively backwards in time, starting at the end of the lifetime of the insurance liabilities. The main question only partially considered in this paper is the uniqueness of the value. We derive upper bounds on the value and, for the special case of replication by risk-free one-year zero-coupon bonds, explicit recursive formulas for calculating the value.

Valuation in Solvency II and IFRS 4 is based on representing the value as a sum of a "best estimate" and a "risk margin". In our framework, it turns out that this split is not natural. Nonetheless, we show that a split can be constructed as a simplification, and that it provides an upper bound on the value under suitable conditions. We illustrate the general results by explicitly calculating the value for a simple example.
\end{abstract}

\vspace{10pt}

{\bf Keywords.} Market-consistent valuation of insurance liabilities, technical provisions, dynamic replication, Solvency II, Swiss Solvency Test, cost of capital, risk margin, best estimate

\vspace{10pt}

{\bf Acknowledgments.} I owe my gratitude to Dr. Philipp Keller, who has suggested and supported the investigation of this topic.

\newpage


\section*{Introduction}

Our starting point is market-consistent valuation of insurance liabilities ("technical provisions") under Solvency II. References to the approach include the Solvency II Framework Directive, DIRECTIVE 2009/138/EC \cite{solv2}, the draft Level 2 Implementation Measures, "Rules relating to technical provisions", EIOPC/SEG/IM13/2010 \cite{im13}, as well as related documents such as the CRO Forum position paper \cite{croforummvl}, the report by the Risk Margin Working Group \cite{iaamlc}, and CEIOPS-DOC-36/09 (former CP 42) \cite{ceiops42}. Many of the concepts used by Solvency II had earlier been introduced in the Swiss Solvency Test (SST), see for instance Federal Office of Private Insurance \cite{sst}.

In Solvency II, according to Article 77 in DIRECTIVE 2009/138/EC \cite{solv2}, the market-consistent value of an insurance liability is determined in one of two ways: If the cash-flows of the liability (or part of the cash-flows) can be replicated reliably using financial instruments for which a reliable market value is observable, then the value (of the part of the cash-flows) is determined on the basis of the market value of these instruments. Otherwise, the value is equal to the sum of best estimate and risk margin,
\begin{equation}\label{def-mcvl}
\mbox{Market-consistent value} = \mbox{best estimate}+\mbox{risk margin}.
\end{equation}
In Article 77 of the DIRECTIVE 2009/138/EC \cite{solv2}, the best estimate is defined as the "probability-weighted average of future cash-flows, taking account of the time value of money (expected present value of future cash-flows), using the relevant risk-free interest rate term structure," and the risk margin is "calculated by determining the cost of providing an amount of eligible own funds equal to the Solvency Capital Requirement necessary to support the insurance and reinsurance obligations over the lifetime thereof."

In EIOPC/SEG/IM13/2010 \cite{im13}, the risk margin is expressed by a cost of capital approach as the sum of the costs of future required capital $SCR_t$ by the expression
\begin{equation}\label{def-riskmargininitial}
\mbox{Risk margin} = CoC \cdot \sum_{t\geq 0} \frac{SCR_t}{(1+r_{t+1})^{t+1}}.
\end{equation}
where $CoC$ denotes the cost of capital rate, which is assumed deterministic and constant and, in EIOPC/SEG/IM13/2010 \cite{im13}, is set to $6\%$ above the risk-free rate. The sum is over all years $t$, and $r_{t+1}$ denotes the risk-free discount rate for $t+1$ years, which means that the capital costs of year $t$ are discounted back from the end of year $t$. The infinite sum above will be finite in practice, limited by the lifetime of the corresponding liabilities.

In the formula \eqref{def-riskmargininitial}, $SCR_t$ denotes the Solvency Capital Requirement from Solvency II for the year $t$, i.e. the required capital, which is defined in Article 101 of DIRECTIVE 2009/138/EC \cite{solv2} to correspond "to the Value-at-Risk of the basic own funds of an insurance or reinsurance undertaking subject to a confidence level of $99,5\%$ over a one-year period." We consider $SCR_t$ in more detail in Section~\ref{sec-s2}, but note here the following: for future years $t>0$, $SCR_t$ depends on the future state at the beginning of year $t$, which is currently not known. Consequently, $SCR_t$ for $t>0$ is a random variable, implying that the risk margin as defined by \eqref{def-riskmargininitial} is a random variable and not a number, as it ought to be.

To avoid this problem, every $SCR_t$ in \eqref{def-riskmargininitial} could be replaced with the current expected value of the random variable $SCR_t$, so that the risk margin would correspond to the \emph{expected} costs of future required capital. On might then think that this "expected risk margin" is only sufficient in expectation. However, as we show in Section~\ref{sec-rm}, it turns out that, under suitable assumptions, the "expected risk margin" is sufficient not just in expectation but always.

As an additional complication, according to EIOPC/SEG/IM13/2010 \cite{im13}, the $SCR_t$ used for calculating the risk margin in Solvency II is not calculated for the company (undertaking) under consideration, but for a "reference undertaking" to which the insurance liabilities are hypothetically transferred. The features of the transfer and the properties of the reference undertaking are specified in detail in EIOPC/SEG/IM13/2010 \cite{im13}.

The preceding comments aim to indicate that valuation of insurance liabilities according to Solvency II is not obvious and that a more explicit framework might be needed. The objective of this paper is to propose such a framework, which incorporates the Solvency II approach as a special case. The proposed framework expresses the value in terms of the market price of a portfolio of assets. It is based on dynamic replication over multiple time periods of the cash-flows of the insurance liabilities by portfolios of assets with reliable market prices. In this sense, it relies on the seminal idea of valuation by replication underlying the (risk-neutral) pricing of financial derivatives.

The framework needs to capture two additional aspects. The first additional aspect is that insurance liabilities can typically not be perfectly replicated by assets with reliable market prices, so there remains a part of the cash-flows which cannot be replicated. According to Solvency II, the non-replicable part of the cash-flows is covered by capital funds, giving rise to capital costs. The second additional aspect is that the replication cannot always be continued. In Solvency II, this is because the required capital funds are given by the Solvency Capital Requirement in terms of the VaR at $99.5\%$, which implies that they will be insufficient with $0.5\%$ probability.

The main exogenous assumption in the framework is what we call the "acceptability condition" in the remainder of this paper. The acceptability condition is the condition on when the stochastic return on the capital funds is acceptable to the investor of the capital funds. In other words, it specifies the "price" of the capital investment. In this paper, we work with the acceptability condition implicit in the definition of the risk margin in Solvency II, which is that the expected excess return over the risk-free return is equal to the capital cost rate $CoC$. We note that this acceptability condition is formulated independently of the capital investor and so does not take into account the specific risk profile of a given investor.

In general, the value of the insurance liabilities can depend on the assumptions made about future new business written, as future new business might diversify with the run-off of the current business. In this paper, we consider a "run-off" situation in the sense that we assume that no future new business is written.

Under the proposed framework, it turns out that a precise calculation of the value needs to be done recursively backwards in time, starting at the end of the lifetime of the insurance liabilities. Moreover, we find that there is no natural split of the value into a "best estimate" and a "risk margin"; the value is simply given as the market price of a specific portfolio of assets. However, we show in Section~\ref{sec-rm} that, under certain conditions, a split can be introduced, and that the resulting sum of "best estimate" and "risk margin" is not equal to the value but provides an upper bound.

The proposed framework can be situated in the context of (market-consistent) valuation in incomplete markets. At present, on the one hand, there is extensive academic literature on aspects of valuation by dynamic replication and in incomplete markets, while, on the other hand, from a practitioner's perspective, there are numerous articles about certain aspects of the Solvency II valuation, such as simplified approaches, the risk-free rate, the cost of capital rate etc. This paper aims to bridge the two areas, by formulating Solvency II valuation in the framework of dynamical replication in incomplete markets and in this way defining what it really means.

The recent paper Salzmann-W\"uthrich \cite{sawue} provides a discussion of a mathematically consistent multi-period risk measure approach for the calculation of a risk margin to cover possible shortfalls in the liability runoff of general (i.e. non-life) insurance companies. Moreover, explicit calculations are presented by means of a Bayes chain ladder model and a risk measure chosen to be a multiple of the standard deviation.

Our approach is related to the Valuation Portfolio (VaPo) according to B\"uhlmann \cite{vapo} and W\"uthrich et al. \cite{wbf}: An insurance obligation can be better understood not in terms of monetary values but as a collection of appropriately chosen financial instruments. In contrast to the VaPo approach, we do not express the actual liability as a portfolio of potentially synthetic instruments, but consider replication of the liability's cash-flows by a portfolio of assets with reliable market prices.

The risk margin in the context of the one-year risk is also investigated in Ohlsson-Lauzeninsks\cite{ohllau}.

We mention here also the classical paper Artzner et al. \cite{cohrm} on coherent risk measures or, equivalently, "acceptable future net worths". While risk measures play a prominent part in what follows, that paper considers a one-period setting and does not consider dynamic replication.

An alternative approach to the acceptability condition is given by utility indifference pricing similar to M{\o}ller \cite{actvalprinc}.

\subsection*{Organization of the paper}

The paper is organized as follows: In Section~\ref{sec-setup}, we set up the mathematical notations and assumptions, including the filtration used to express available information and risk-free discounting. In Section~\ref{sec-s2}, we investigate the Solvency II approach to valuation and solvency as a motivation for our formulation, in Section~\ref{sec-frame}, of the proposed framework for valuation. In Section~\ref{sec-oneyear}, we then investigate valuation over multiple one-year time periods in the proposed framework. In Section~\ref{sec-rm}, we consider the risk margin and prove one of the main results of this paper: under suitable assumptions, the sum of "best estimate" and "risk margin" is an upper bound for the value. Finally, in Section~\ref{sec-ex}, we explicitly calculate the value for a simple example and show that the upper bound sometimes reverses the "ordering" of the value between different liabilities.

\section{Set up and notation}\label{sec-setup}

We consider time periods of one year, where year $t=0,1\ldots$ refers to the time-period $[t,t+1)$. To be able to describe actions taken at the end of year $t$, we denote  by $(t+1)^-$ a point in time just before time $t+1$.

We assume that there exists a filtration $(\mathcal{F}_t)_t$ expressing the information available (known) at time $t$. To specify the filtration, we use the notation identical to W\"uthrich et al. \cite{wbf}. That is, we define a filtered probability space by choosing a probability space $(\Omega, \mathcal{F}, \mathcal{P})$ and an increasing sequence of $\sigma$-fields $(\mathcal{F}_t)_{t=0,\ldots,n}$ with
\begin{equation*}
\{ \emptyset, \Omega\}=\mathcal{F}_0\subseteq \mathcal{F}_1 \subseteq\ldots \subseteq\mathcal{F}_n
\end{equation*}
where we assume $\mathcal{F}_n =\mathcal{F}$ for simplicity. All random variables considered are assumed to be adapted to the filtration $(\mathcal{F}_t)_t$.

Given a set $A$, we denote its complement by $A^c$ and its characteristic function by $1_A$. The characteristic function takes the value $1$ on $A$ and $0$ on $A^c$. We consider risk measures $\rho$, taking a random variable $X$ to a real number $\rho\{ X\}$. We define losses to be negative numbers and the risk $\rho$ of a loss to be a positive number. A risk measure $\rho$ is called \emph{translation-invariant} (or cash-invariant) if, for any random variable $X$ and any real number $b$, $\rho\{ X+b\}=\rho\{ X\}-b$. It is called \emph{monotone} if, for any two random variables with $X_1\leq X_2$, we have $\rho\{ X_2\}\leq \rho\{ X_1\}$.

The main objective of the paper is the valuation of a given insurance liability $\mathcal{L}$ with stochastic cash-flows $(X_t)_t$ corresponding to claims payments, expenses etc., where $X_t$ denotes the cash-flow in year $t$. For simplicity, we assume that the cash-flow $X_t$ occurs and is known at time $(t+1)^-$. In terms of the filtered probability space, we assume that the cash-flows $(X_t)_t$ are adapted to the filtration $(\mathcal{F}_t)_t$ and that $X_t$ is $\mathcal{F}_{t+1}$-measurable. At time $t$, for instance, intuitively speaking, the value of $X_t$ is not known, but the distribution of $X_t$ is known.

We assume throughout the paper that market prices of certain financial instruments are available at future points in time. That is, the information $\mathcal{F}_t$ available at time $t$ includes the market prices of financial instruments at time $t$. The future market prices of instruments are given by stochastic models.

A \emph{reference market} (or replicating market) is defined to be a set of financial instruments for which reliable market prices are assumed to exist. As an idealization, market prices of an instrument are reliable if any quantity of the instrument can instantaneously be traded without affecting the market price. Typically, it is assumed that, if an instrument is traded in a deep and liquid market, then its (unique and additive) reliable market price is an emergent property of the corresponding market. An asset portfolio consisting of instruments from the reference market is called a \emph{reference portfolio} (or replicating portfolio).

Deep and liquid (and transparent) markets are defined in the Solvency II context in EIOPC/SEG/IM13/2010 \cite{im13}, which also specifies that the model used for the projection of market parameters (or market prices) needs to ensure that no arbitrage opportunity exists. In line with this requirement, we assume in the following that the reference market is arbitrage-free.

We assume that risk-free zero-coupon bonds are part of the reference market, and do not specify which other instruments might be in the reference market. As mentioned above, we assume models for the stochastic future market prices for the instruments in the reference market.

To express risk-free discounting of a cash-flow $x$ occurring at time $s$ discounted to time $t\le s$, we write
\begin{equation*}
\text{pv}_{(s\to t)}(x)
\end{equation*}
which is to be understood as the value at time $t$ of a risk-free zero-coupon bond in the appropriate currency with face value $x$ maturing at time $s$. It is in this sense not possible to risk-free discount stochastic (as opposed to deterministic) cash-flows, because the cash-flow of a risk-free zero-coupon bond is deterministic.

We define the risk-free terminal value of an amount $x$ invested at time $t$ in a risk-free zero-coupon bond maturing at time $s\ge t$ by
\begin{equation*}
\text{tv}_{(t\to s)}(x).
\end{equation*}
Let $R_t^{(m)}$ denote the annual rate for a risk-free zero-coupon bond at time $t$ with a term of $m=1,2\ldots$ years, so $R_t^{(m)}$ is $\mathcal{F}_t$-measurable, and
\begin{equation}\label{def-annualriskfreerate}
\text{pv}_{(t+m\to t)}(x_{t+m})=( 1+R_t^{(m)})^{-m}\cdot x_{t+m}.
\end{equation}
Consider a risk-free forward contract set up at time $t$, which specifies that, at time $t+1$, for a price of  $B_{t+1}^m(t)$ fixed at time $t$, a risk-free zero-coupon bond is purchased with a payoff of $1$ at time $t+1+m$. Because of no-arbitrage, we must have that
\begin{equation}\label{eq-noarbitrageforward}
( 1+R_{t}^{(1)})^{-1}\cdot B_{t+1}^m(t)=( 1+R_{t}^{(m+1)})^{-m-1}.
\end{equation}
It is common to identify the forward price with the expectation at time $t$ of the corresponding bond price, i.e.
\begin{equation*}
B_{t+1}^m(t)=\mathbb{E}\{ ( 1+R_{t+1}^{(m)})^{-m}  \mid\mathcal{F}_t\}.
\end{equation*}
In general, the price of a forward contract might contain an additional premium for liquidity, so
\begin{equation}\label{assumpt-forwardprice2}
B_{t+1}^m(t)\geq \mathbb{E}\{ ( 1+R_{t+1}^{(m)})^{-m}  \mid\mathcal{F}_t\}.
\end{equation}
Equations \eqref{assumpt-forwardprice2} and \eqref{eq-noarbitrageforward} imply that
\begin{equation*}
( 1+R_{t}^{(1)})^{-1}\cdot \mathbb{E}\{ ( 1+R_{t+1}^{(m)})^{-m}  \mid\mathcal{F}_t\}\leq ( 1+R_{t}^{(m+1)})^{-m-1}.
\end{equation*}

\section{Market-consistent valuation in Solvency II}\label{sec-s2}

Because we are proposing a framework for valuation which incorporates Solvency II as a special case, we investigate in the following the Solvency II approach to valuation and solvency in more detail. The expressions we derive here are used to motivate the definition of the general framework in Sections~\ref{sec-frame} and \ref{sec-oneyear}.

To begin with, we consider the Solvency Capital Requirement $SCR_t$, which is defined to correspond "to the Value-at-Risk of the basic own funds of an insurance or reinsurance undertaking subject to a confidence level of $99,5\%$ over a one-year period." (DIRECTIVE 2009/138/EC \cite{solv2}).

For the actual balance sheet of the company (or "undertaking") under consideration, for simplicity, we identify in the following basic and eligible own funds (as defined under Solvency II) with the available capital, denoted by $AC_t$ at time $t$, which is defined as the difference between the market-consistent value $V_t(\mathcal{A}_t)$ of the assets $\mathcal{A}_t$ and the market-consistent value $V_t(\mathcal{L}_t)$ of the liabilities $\mathcal{L}_t$,
\begin{equation*}
AC_t:=V_t(\mathcal{A}_t)-V_t(\mathcal{L}_t).
\end{equation*}
$SCR_t$ can then be written in terms of the one-year change of the available capital,
\begin{equation}\label{def-scrt0}
SCR_t:=\text{pv}_{(t+1\to t)}\left(\rho\left\{ AC_{(t+1)^{-}}-\text{tv}_{(t\to t+1)}\left(AC_{t}\right)\mid\mathcal{F}_t\right\}\right),
\end{equation}
where the risk measure $\rho$ is prescribed to be the Value-at-Risk $VaR_{\alpha}$ at the $\alpha =99.5$-percentile
\begin{equation}\label{def-rhosii}
\rho \{Z\}:=VaR_{\alpha}\{-Z\}.
\end{equation}
Note that $SCR_t$ is calculated based on the information $\mathcal{F}_t$ available at time $t$.

$SCR_t$ is the capital requirement under Solvency II in the assessment of the solvency of a company. Solvency is effectively specified by the condition that, with $99.5\%$ probability, at the end of year $0$ (at time $t=1^{-}$), the market-consistent value of the assets exceed the market-consistent value of the liabilities,
\begin{equation*}
V_1(\mathcal{A}_{1^{-}})\geq V_1(\mathcal{L}).
\end{equation*}
which corresponds to the requirement at time $t=0$ that the available capital exceed the required capital,
\begin{equation*}
AC_0\geq SCR_0,
\end{equation*}
with $SCR_0$ given by \eqref{def-scrt0} for $t=0$. In order to assess the solvency condition, we in particular need to know the value of the insurance liabilities.

\vspace{10pt}

Regarding the value of the insurance liabilities, we recall from the introduction that the risk margin as a component of the value is defined in terms of the Solvency Capital Requirement $SCR_t$. However, $SCR_t$ is not calculated for the company which currently holds the insurance liabilities, but for a so-called reference undertaking to which the insurance liabilities $\mathcal{L}$ are hypothetically transferred for the purpose of valuation.

The features of this transfer and the properties of the reference undertaking are defined in EIOPC/SEG/IM13/2010 \cite{im13}. After the transfer, the liability side of the balance sheet of the reference undertaking is assumed to consist of the transferred insurance liabilities. The assets are assumed to consist of two parts. The first part is a reference portfolio of assets we denote by $RP_t$, which is used to cover the value of the insurance liabilities. That is, the value $V_t(\mathcal{L})$ of the insurance liabilities $\mathcal{L}$ at time $t$ is given by the market price of the reference portfolio
\begin{equation}\label{def-value}
V_t(\mathcal{L})=V_t(RP_t)
\end{equation}
The second part of the assets consists of available capital $AC_t$, assumed invested risk-free, equal to the Solvency Capital Requirement $SCR_t$ needed for the reference undertaking.

Under these specifications, $SCR_t$ from \eqref{def-scrt0} can be rewritten as follows. The available capital $AC_{(t+1)^-}$ at the end of year $t$ (before any potential recapitalization) is given by the year-end value of the assets reduced by the cash-flow $X_t$ in year $t$ and the year-end value of the insurance liabilities $V_{t+1}(\mathcal{L})$, i.e.
\begin{equation*}
AC_{(t+1)^-}=\text{tv}_{(t\to t+1)}\left(SCR_t\right)+V_{t+1}(RP_t)-X_t-V_{t+1}(\mathcal{L}).
\end{equation*}
Since $AC_t=SCR_t$, we get from \eqref{def-scrt0} the following formula for the $SCR_t$ for the purpose of valuation,
\begin{equation}\label{expr-scrt0}
SCR_t=\text{pv}_{(t+1\to t)}\left(\rho\left\{ V_{t+1}(RP_t)-X_t-V_{t+1}(\mathcal{L})\mid\mathcal{F}_t\right\}\right).
\end{equation}
It becomes clear from this expression that, in order to calculate the market-consistent value $V_{t}(\mathcal{L})$ at time $t$ by \eqref{def-value}, which through the risk margin (or through the acceptability condition \eqref{cond-expvalueinit} below) depends on $SCR_t$, one first needs to calculate the market-consistent value $V_{t+1}(\mathcal{L})$ at time $t+1$ etc. This implies that a precise calculation of the market-consistent value has to be recursively backwards in time.

The expression \eqref{expr-scrt0} also shows that underlying market-consistent valuation of insurance liabilities is dynamic replication with a one-year time period. At time $t$, the portfolio $RP_t$, which defines the value $V_{t}(\mathcal{L})$ through \eqref{def-value}, is set up to replicate the random variable $X_t+V_{t+1}(\mathcal{L})$ at time $t+1$. In the case of perfect replication, $V_{t+1}(RP_t)$ is always equal to $X_t+V_{t+1}(\mathcal{L})$, so that, at time $t+1$, a new replicating portfolio $RP_{t+1}$ can be constructed by a suitable reinvestment of the assets $RP_t$, and no capital funds are needed.

For insurance liabilities, perfect replication is typically not possible. Hence, additional capital funds are needed for the instances in which $V_{t+1}(RP_t)$ is less than the sum $X_t+V_{t+1}(\mathcal{L})$, so capital funds account for the part of the liability which cannot be replicated. This gives rise to capital requirements $SCR_t$ according to \eqref{expr-scrt0}, which depend on the real-world probabilities of different amounts of the difference $V_{t+1}(RP_t)-X_t-V_{t+1}(\mathcal{L})$. In general, future new business might be written and thus be added to the balance sheet in the future, and the corresponding cash-flows might diversify with the cash-flows of the liability $\mathcal{L}$ under consideration. Since insurance liabilities typically run-off over several years, this means that the current value of an insurance liability is potentially affected by insurance obligations which are added to the balance sheet in the future, i.e. future new business, at least until the liability is fully run-off.

In Solvency II, the assumptions on future new business in the calculation of the risk margin are currently not really clear. In this paper, we consider a "run-off" situation in the sense that we assume that no future new business is written.

The capital $SCR_t$ comes with a cost to make the capital investment acceptable to the capital provider, which we express through the acceptability condition. The acceptability condition is encoded in the definition of the risk margin, and requires that the expected return on the capital $SCR_t$ at the end of year $t$ be equal to a cost of capital rate $CoC$ in excess of the risk-free rate. The value of the capital investment at the end of the year is determined from the available capital $AC_{(t+1)^-}$, considering that its value is never negative, since the capital provider has limited liability. Hence, the acceptability condition for year $t$ can be written as
\begin{equation}\label{cond-expvalueinit}
\mathbb{E}\left\{\max\left\{0, AC_{(t+1)^-}\right\}\mid\mathcal{F}_t\right\}  =  \text{tv}_{(t\to t+1)}\left(SCR_t\right)+CoC\cdot SCR_t.
\end{equation}
The left hand side of equation \eqref{cond-expvalueinit} is the expected value at time $(t+1)^-$ of the investment of the capital funds, and the right hand side is equal to the risk-free return plus the cost of capital rate on the capital funds $SCR_t$ invested at time $t$. We find in the following that the acceptability condition determines the reference portfolio $RP_t$ or allows to derive upper bounds.

\section{Framework for the valuation of insurance liabilities}\label{sec-frame}

At a conceptual level, the proposed framework for market-consistent valuation of an insurance liability $\mathcal{L}$ is based on three ideas:
\begin{enumerate}
\item Dynamic replication of the liability cash-flows by assets given by financial instruments with reliable market prices.
\item Covering the remaining non-replicable part of the cash-flows by capital funds provided by an investor.
\item "Limited liability", i.e. the liability cash-flows in general do not need to be provided for every state of the world.
\end{enumerate}
The first idea is analogous to no-arbitrage or risk-neutral pricing of financial instruments in complete markets. The second idea accounts for the fact that insurance liabilities, in particular, can usually not be perfectly replicated by instruments with reliable market prices, and relates to the requirements by the regulatory authorities, for instance in Solvency II, that companies need to hold a required amount of capital. The third idea relates to the fact that the required regulatory capital typically only needs to be large enough to ensure that the insurance obligations can be satisfied with high probability. In Solvency II, for instance, this is expressed by the 99.5\% Value-at-Risk over a one-year time period.

Valuing the liability $\mathcal{L}$ then means finding a replication procedure, which at a point in time $t$ consists of a portfolio of assets composed of a reference portfolio $RP_t$ and capital funds $C_t$. In a static replication procedure, the portfolio $RP_t$ is held over the lifetime of the liability $\mathcal{L}$. In a discrete dynamic replication procedure, $RP_t$ is dynamically adjusted, in our case (at least) over successive one-year time periods, leading to a sequence of reference portfolios $RP_t$, $RP_{t+1}$... The capital investment $C_t$ for year $t$ is tied from time $t$ to time $t+1$ and is used to cover cash-flow mismatches between $\mathcal{L}$ and $RP_t$ in year $t$ and to convert the assets at time $t+1$ to the next reference portfolio $RP_{t+1}$. At time $t+1$, new capital funds potentially need to be raised for covering the next time period.

The capital investment is assumed to have the following two properties:
\begin{itemize}
\item As an obligation, the capital investment has lowest seniority (i.e. the capital funds are used for covering all other obligations).
\item The capital investment comes with limited liability (i.e. its value is never negative).
\end{itemize}
The crucial assumption about the capital investment is the \emph{acceptability condition}: Under which conditions is the stochastic return from the capital investment acceptable to the capital provider? The acceptability condition specifies the risk-return preferences of the capital investor and is the one input to the framework in addition to the current and future market prices of the financial instruments available for replication.

If an acceptability condition is specified and the reference portfolio $RP_t$ is set up such that the capital investment $C_t$ fulfils the acceptability condition, then the value of $\mathcal{L}$ at time $t$ is defined as in \eqref{def-value} to be the market price of the reference portfolio $RP_t$,
\begin{equation}\label{def-initialvalueofl}
V_{t}(\mathcal{L}):=V_{t}(RP_t).
\end{equation}
The implicit assumption is that required capital funds can always be raised if an acceptable (stochastic) return can be provided. In general, \eqref{def-initialvalueofl} only holds at the point in time at which the corresponding reference portfolio is set up and not in between.\footnote{Moreover, to specify acceptability of the stochastic future value of the capital investment, we have to specify at which time the capital amount $C_t$ is determined, as this is the date at which acceptability of the return to the capital provider is decided. In the following, we assume that $C_t$ is determined at time $t$ and not before.} A major question which we only partially consider in this paper is the uniqueness of the value defined according to \eqref{def-initialvalueofl}.

In view of the third idea underlying the proposed valuation approach, there is the further complication that we allow for limited liability in the replication procedure by limiting the required capital $C_t$. That is, the liability $\mathcal{L}$ does not need to be replicated for every state of the world.

In a dynamic multi-period replication procedure, limited liability potentially applies both backwards and forward in time. Limited liability applies backwards in time because at any point in time we do not only reflect the defaults in the current time period, but additionally the defaults in any future time period.

Limited liability also applies forward in time, in the sense that, at time $t$, there are states of the world in which default has already occurred at a prior point in time. If the liability $\mathcal{L}$ is considered to be a contract with a specific company, this means that, in such a state, the company has defaulted on its obligations prior to $t$, and so the obligations towards future cash-flows cannot be fulfilled anymore to the extent required. We use a different approach, which appears reasonable from the perspective of an insurance regulator, and consider the value at time $t$ of the liability "as such", characterized by future cash-flows and future limited liability, disregarding the replication history prior to time $t$.

\section{Valuation under the framework}\label{sec-oneyear}

The valuation of the liability $\mathcal{L}$ according to \eqref{def-initialvalueofl} is achieved by calculating recursively backwards in time, starting at the end of the lifetime of the liability. Let $T$ denote the final year of the lifetime of $\mathcal{L}$, i.e. $T$ is the smallest whole number such that $X_{T+1}, X_{T+2}\ldots =0$. Then,
\begin{equation*}
V_{T+1}(\mathcal{L})=0.
\end{equation*}
In the recursion step, we assume that the value $V_{t+1}(\mathcal{L})$ at time $t+1$ is known and equal to the market price of a reference portfolio $RP_{t+1}$,
\begin{equation*}
V_{t+1}(\mathcal{L})=V_{t+1}(RP_{t+1}).
\end{equation*}
We then have to calculate the value $V_{t}(\mathcal{L})$ at time $t$ as the market price of a suitable reference portfolio $RP_t$. To this end, define the random variable $Y_{t+1}$ to be the sum of the cash-flow $X_t$ in year $t$ and the value $V_{t+1}(\mathcal{L})$ at the end of the year,
\begin{equation}\label{def-ytp1}
Y_{t+1}:=X_t+V_{t+1}(\mathcal{L}).
\end{equation}
In particular, $Y_{T+1}=X_T$.

For the dynamic replication in year $t$, the random variable $Y_{t+1}$ needs to be matched by assets given by a reference portfolio $RP_t$ together with capital funds $C_t\geq 0$ provided for one year by a capital investor. The capital funds $C_t$ are assumed to be invested at time $t$ in a risk-free one-year zero-coupon bond. We allow for the fact that the replication cannot always be continued past time $t+1$.

To formalize these assumptions, given a reference portfolio $RP_t$ and capital funds $C_t$, the set $A_t$ is defined to be the set of states in which the cash-flow $X_t$ can be provided and the replication can be continued past time $t+1$ by converting the assets available at time $t+1$,
\begin{equation*}
V_{t+1}(RP_t)+\text{tv}_{(t\to t+1)}(C_t)-X_t,
\end{equation*}
to the new reference portfolio $RP_{t+1}$. The set $A_t$ and its probability $\gamma_t$ are thus given by
\begin{eqnarray}
A_t &  := &  \left\{ Y_{t+1}\le \text{tv}_{(t\to t+1)}(C_t)+V_{t+1}(RP_{t})\right\},\label{def-at}\\
\gamma_t &  := &  \mathbb{P}\{A_t\mid\mathcal{F}_t\}=\mathbb{E}\{1_{A_t}\mid\mathcal{F}_t\}.\nonumber
\end{eqnarray}
In view of the characteristics of the capital investment outlined in Section~\ref{sec-frame}, the value of the capital investment at time $t+1$ is given by the maximum of zero and the value of the assets left after all other obligations have been considered, so the value $\tilde{C}_t$ to the capital provider at time $t+1$ of the capital investment $C_t$ can be written as
\begin{equation}\label{def-tildect}
\tilde{C}_t:=1_{A_t}\cdot \left(\text{tv}_{(t\to t+1)}(C_t)+V_{t+1}(RP_t)- Y_{t+1}\right).
\end{equation}
The acceptability condition is specified in the remainder of the paper as prescribed under Solvency II and, in particular, expressed in terms of the expected value of the capital investment. Corresponding to \eqref{cond-expvalueinit}, the acceptability condition is defined to be the condition that the expected excess return over risk-free of the capital investment be equal to a given $\mathcal{F}_t$-measurable "dividend" $D_t\geq 0$,
\begin{equation}\label{def-limliab}
\mathbb{E}\{ \tilde{C}_t \mid\mathcal{F}_t\}-\text{tv}_{(t\to t+1)}(C_t)=D_t.
\end{equation}
The acceptability condition \eqref{def-limliab} translates into an equivalent condition on the reference portfolio: if we insert the expression \eqref{def-tildect} for $\tilde{C}_t$ into \eqref{def-limliab}, we get the condition on the reference portfolio $RP_t$ that
\begin{equation}\label{cond-expvaluewidetilderpt}
\mathbb{E}\{ 1_{A_t}\cdot V_{t+1}(RP_t)\mid\mathcal{F}_t\}=\mathbb{E}\{ 1_{A_t}\cdot Y_{t+1}\mid\mathcal{F}_t\}+(1-\gamma_t)\cdot\text{tv}_{(t\to t+1)}(C_t)+D_t.
\end{equation}
Note that condition \eqref{cond-expvaluewidetilderpt} is complicated in the sense that it depends on $RP_t$, $C_t$, $D_t$, and $A_t$, all of which are in general interlinked with each other.

The value $V_t(\mathcal{L})$ can then be defined in the following way: Given $Y_{t+1}$ defined in \eqref{def-ytp1}, a reference portfolio $RP_t$, capital funds $C_t$, the set $A_t$ and a dividend $D_t$ such that the acceptability condition \eqref{def-limliab} or equivalently \eqref{cond-expvaluewidetilderpt} is satisfied, the value $V_t(\mathcal{L})$ of the insurance liability $\mathcal{L}$ at time $t$ is defined to be the market price of the reference portfolio,
\begin{equation}\label{def-valuebyrpt}
V_t(\mathcal{L}):=V_t(RP_t).
\end{equation}
This immediately entails two questions: Does there always exist a solution to \eqref{def-limliab}, i.e. can a value always be defined by \eqref{def-valuebyrpt}? If so, is such a solution unique, i.e. is the value defined by \eqref{def-valuebyrpt} unique? We provide partial answers to these questions below, but we do not investigate the general question of the uniqueness of the value. In particular, note that the value defined by \eqref{def-valuebyrpt} in general depends on the set $A_t$.

In this respect, we stress that we are not suggesting a "new" definition of the market-consistent value; all we claim to have done so far is provide a precise and more general formulation of the valuation approach for insurance liabilities from Solvency II. The Solvency II approach follows from the general framework by the following three assumptions:
\begin{enumerate}
 \item The capital $C_t$ is given in terms of the reference portfolio $RP_t$ by $SCR_t$ defined in \eqref{def-scrt0} (compare with \eqref{expr-scrt0}), i.e. for a translation-invariant risk measure $\rho$,
    \begin{equation}\label{def-ctinlemma1}
        C_t:=\text{pv}_{(t+1\to t)}\left(\rho\{ V_{t+1}(RP_t)-Y_{t+1}\mid\mathcal{F}_t\}\right).
    \end{equation}
 \item $\rho$ is given as in \eqref{def-rhosii} by the Value-at-Risk $VaR$ at the $99.5\%$ percentile.
 \item $D_t$ is defined as a constant cost of capital rate $\eta >0$ times the capital, i.e.
    \begin{equation}\label{def-dividendtbycoc}
    D_t:=\eta\cdot C_t.
    \end{equation}
\end{enumerate}
In addition, the current prescriptions from EIOPC/SEG/IM13/2010 \cite{im13} suggest that the reference portfolio $RP_t$ should be selected to minimize the capital $C_t$. This can be thought of as a requirement to ensure the uniqueness of the value. However, with the Solvency II selection of $\rho$ as the $99.5\%$ VaR, the capital $C_t$ according to \eqref{def-ctinlemma1}, the set $A_t$ from \eqref{def-at}, and the acceptability condition \eqref{cond-expvaluewidetilderpt} are not affected by values of the difference $Y_{t+1}-V_{t+1}(RP_t)$ beyond their $99.5\%$-quantile, which suggests there might not be uniqueness even if capital is minimized. Of course, an immediate way to ensure uniqueness would be to define the value as the minimum or infimum of the market prices at time $t$ of all reference portfolios $RP_t$ satisfying the acceptability condition \eqref{def-limliab} for the same $C_t$ and $A_t$.

\vspace{10pt}

In the following, we first investigate the existence of solutions to condition \eqref{def-limliab} under two different approaches. Next, we derive in Lemma~\ref{lemma-valueupperbound} an upper bound on solutions of \eqref{def-limliab} and thus on the value defined by \eqref{def-valuebyrpt}. Finally, we show in Theorem~\ref{prop-valueforriskfree} that a unique solution exists and can be explicitly calculated if we assume that the reference market consists only of risk-free zero-coupon bonds and that the capital $C_t$ is defined according to \eqref{def-ctinlemma1}.

For the following proposition, we define an eligible dividend as follows:
\begin{definition}
An $\mathcal{F}_t$-measurable dividend $D_t\geq 0$ from \eqref{def-limliab} is called an \emph{eligible dividend} if, given $\mathcal{F}_t$, $D_t$ is a continuous and monotonously increasing function of $C_t$ with $D_t=0$ for $C_t=0$.
\end{definition}
Clearly, the dividend $D_t$ defined by \eqref{def-dividendtbycoc} is eligible.

We now show that solutions to the acceptability condition \eqref{def-limliab} exist given a suitable form of the set $A_t$ or the capital funds $C_t$.
\begin{prop}\label{lemma-existenceofsolution}
Let $D_t$ be an eligible dividend, and let $Y_{t+1}$ from \eqref{def-ytp1} be given.
\begin{itemize}
\item[(a)] Let $RP_t^{(0)}$ be a reference portfolio and define the set $A_t^{(0)}$ by
    \begin{equation*}
    A_t^{(0)}  :=  \{ Y_{t+1}\le V_{t+1}(RP_{t}^{(0)})\}.
    \end{equation*}
    Then, there exists a capital amount $C_t\geq 0$ and a reference portfolio $RP_t$ such that the corresponding set $A_t$ defined by \eqref{def-at} is equal to $A_t^{(0)}$ and the acceptability condition \eqref{def-limliab} is satisfied.
\item[(b)] Let the capital $C_t$ be given as a function of a reference portfolio $RP_t$ by \eqref{def-ctinlemma1} for a translation-invariant risk measure $\rho$.
     Given a reference portfolio $RP_t^{(0)}$, let the corresponding capital $C_t^{(0)}$ be given by \eqref{def-ctinlemma1} and $A_t^{(0)}$ by \eqref{def-at}.

    Then, there exists a reference portfolio $RP_t$ with the corresponding capital $C_t\geq 0$ given by \eqref{def-ctinlemma1} and the set $A_t$ given by \eqref{def-at} such that $A_t=A_t^{(0)}$ and the acceptability condition \eqref{def-limliab} is satisfied.
\end{itemize}
\end{prop}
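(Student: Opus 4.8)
The plan is to exploit a one‑parameter family of moves that leaves both the set $A_t$ and the payoff $\tilde C_t$ of the capital investment untouched, namely shifting a position in the one‑year risk‑free zero‑coupon bond between the reference portfolio and the capital funds. Used correctly, this collapses the acceptability condition \eqref{def-limliab} to a scalar equation $\mathrm{tv}_{(t\to t+1)}(C_t)+D_t=m$ with an $\mathcal{F}_t$‑measurable right‑hand side $m\ge 0$, which is then solved by the intermediate value theorem using that $D_t$ is eligible.

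For part (a), for each $C_t\ge 0$ I would take $RP_t$ to be $RP_t^{(0)}$ minus an amount $C_t$ (measured at time $t$) invested in the one‑year zero‑coupon bond, so that $V_{t+1}(RP_t)=V_{t+1}(RP_t^{(0)})-\mathrm{tv}_{(t\to t+1)}(C_t)$. Then $\mathrm{tv}_{(t\to t+1)}(C_t)+V_{t+1}(RP_t)=V_{t+1}(RP_t^{(0)})$ identically, so the set $A_t$ from \eqref{def-at} is exactly $A_t^{(0)}$ (not merely up to a null set), and by \eqref{def-tildect}, $\tilde C_t=1_{A_t^{(0)}}\bigl(V_{t+1}(RP_t^{(0)})-Y_{t+1}\bigr)$, which does not depend on $C_t$ and is nonnegative because $Y_{t+1}\le V_{t+1}(RP_t^{(0)})$ on $A_t^{(0)}$. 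Put $m:=\mathbb{E}\{\tilde C_t\mid\mathcal{F}_t\}\ge 0$ (finite by the integrability standing in the background). Then \eqref{def-limliab} reads $g(C_t)=m$ with $g(c):=\mathrm{tv}_{(t\to t+1)}(c)+D_t(c)$; eligibility of $D_t$ makes $g$ continuous and strictly increasing on $[0,\infty)$ with $g(0)=0$ and $g(c)\to\infty$, so there is a (unique, $\mathcal{F}_t$‑measurable) $C_t\ge 0$ with $g(C_t)=m$. That $C_t$ together with the associated $RP_t$ satisfies \eqref{def-limliab} and gives $A_t=A_t^{(0)}$.

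For part (b) the capital is slaved to the reference portfolio through \eqref{def-ctinlemma1}, so instead I would move along the family $RP_t(a):=RP_t^{(0)}$ plus an amount $a\in\mathbb{R}$ invested at time $t$ in the one‑year zero‑coupon bond. Then $V_{t+1}(RP_t(a))=V_{t+1}(RP_t^{(0)})+\mathrm{tv}_{(t\to t+1)}(a)$, and since $\mathrm{tv}_{(t\to t+1)}(a)$ is $\mathcal{F}_t$‑measurable, translation invariance of $\rho$ (in its conditional form, with an $\mathcal{F}_t$‑measurable shift) together with $\mathrm{pv}_{(t+1\to t)}\circ\mathrm{tv}_{(t\to t+1)}=\mathrm{id}$ gives, from \eqref{def-ctinlemma1}, $C_t(a)=C_t^{(0)}-a$. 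Hence $\mathrm{tv}_{(t\to t+1)}(C_t(a))+V_{t+1}(RP_t(a))=\mathrm{tv}_{(t\to t+1)}(C_t^{(0)})+V_{t+1}(RP_t^{(0)})$ for every $a$, so once more $A_t=A_t^{(0)}$ and $\tilde C_t$ is independent of $a$, with $m':=\mathbb{E}\{\tilde C_t\mid\mathcal{F}_t\}\ge 0$. As $a$ runs over $(-\infty,C_t^{(0)}]$ the admissible capital $C_t(a)=C_t^{(0)}-a$ runs over all of $[0,\infty)$, so solving $g(C_t)=m'$ as in part (a) and setting $a:=C_t^{(0)}-C_t$ produces the required $RP_t=RP_t(a)$.

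The verifications beyond this are essentially the two identities saying that the aggregate time‑$(t+1)$ replicating value $\mathrm{tv}_{(t\to t+1)}(C_t)+V_{t+1}(RP_t)$ is invariant under the relevant shift; these are immediate from linearity of $\mathrm{tv}$ and are exactly what pins $A_t$ to $A_t^{(0)}$ on the nose. I expect the only genuine points of care to be: in part (b), that $C_t(a)=C_t^{(0)}-a$ relies on translation invariance of $\rho$ with a shift $\mathrm{tv}_{(t\to t+1)}(a)$ that is $\mathcal{F}_t$‑measurable but not in general deterministic, i.e. the conditional version of the property of Section~\ref{sec-setup} (which I would invoke in that form, or else restrict to deterministic one‑year rates); the $\mathcal{F}_t$‑measurability of the intermediate‑value selection $C_t=g^{-1}(m)$, which follows since $g$ is a continuous strictly increasing function depending measurably on $\mathcal{F}_t$; and the finiteness of $m$ (resp. $m'$), i.e. integrability of $1_{A_t^{(0)}}Y_{t+1}$ given $\mathcal{F}_t$ — the one hypothesis not spelled out in the statement but presumably standing throughout. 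Nonnegativity of $m$ and $m'$ is automatic from the shape of $A_t^{(0)}$.
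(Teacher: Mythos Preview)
Your proposal is correct and follows essentially the same approach as the paper: in both parts the paper also moves a one-year risk-free zero-coupon bond between $RP_t^{(0)}$ and the capital funds so that the aggregate $\mathrm{tv}_{(t\to t+1)}(C_t)+V_{t+1}(RP_t)$ is invariant (hence $A_t=A_t^{(0)}$ and $\tilde C_t$ is unchanged), and then solves the scalar equation $\mathrm{tv}_{(t\to t+1)}(C_t)+D_t=m$ using eligibility of $D_t$. Your write-up is somewhat more explicit than the paper's about the intermediate-value step, the conditional form of translation invariance needed in (b), and the measurability/integrability technicalities, but the underlying argument is the same.
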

\begin{proof}
To prove \emph{(a)}, we split up the portfolio $RP_t^{(0)}$ into a reference portfolio $RP_t$ and capital funds $C_t\geq 0$ by removing from $RP_t^{(0)}$ a one-year risk-free zero-coupon bond with value $C_t\geq 0$ to be determined (or going short in the bond). Then,
\begin{equation}\label{equ-vtp1rpt0}
V_{t+1}(RP_{t}^{(0)})=V_{t+1}(RP_{t})+\text{tv}_{(t\to t+1)}(C_t),
\end{equation}
and the acceptability condition \eqref{cond-expvaluewidetilderpt} for $RP_t$ and $C_t$ can be written as the condition on $\text{tv}_{(t\to t+1)}(C_t)+D_t$ that
\begin{equation*}
\text{tv}_{(t\to t+1)}(C_t)+D_t= \mathbb{E}\{ 1_{A_t}\cdot (V_{t+1}(RP_t^{(0)})-Y_{t+1})\mid\mathcal{F}_t\}\geq 0,
\end{equation*}
because \eqref{equ-vtp1rpt0} ensures that the set $A_t$, if defined by \eqref{def-at} for $RP_t$ and $C_t$, is equal to $A_t^{(0)}$, and the far right inequality above holds by definition of $A_t^{(0)}$. If equality holds in the far right inequality, then the acceptability condition is satisfied for $C_t:=0$ and $RP_t:=RP_t^{(0)}$. If not, then the eligibility of the dividend ensures that we find $C_t>0$ such that the acceptability condition holds.

To prove \emph{(b)}, we use a similar approach as for (a), removing a one-year risk-free zero-coupon bond from $RP_t^{(0)}$ to get a new portfolio $RP_t$. The corresponding capital $C_t$ given by \eqref{def-ctinlemma1} then increases by the corresponding amount because of translation-invariance of the risk measure $\rho$, so
\begin{equation*}
V_{t+1}(RP_{t})+\text{tv}_{(t\to t+1)}(C_t)=V_{t+1}(RP_{t}^{(0)})+\text{tv}_{(t\to t+1)}(C_t^{(0)}),
\end{equation*}
hence the set $A_t$ defined by \eqref{def-at} for $RP_t$ and $C_t$ is equal to $A_t^{(0)}$, and
\begin{eqnarray*}
& & \mathbb{E}\{ 1_{A_t}\cdot (V_{t+1}(RP_t^{(0)})+\text{tv}_{(t\to t+1)}(C_t^{(0)})-Y_{t+1})\mid\mathcal{F}_t\}\\
& = & \gamma_t\cdot\text{tv}_{(t\to t+1)}(C_t)+\mathbb{E}\{ 1_{A_t}\cdot (V_{t+1}(RP_t)-Y_{t+1})\mid\mathcal{F}_t\},
\end{eqnarray*}
so using the acceptability condition \eqref{cond-expvaluewidetilderpt}, we get
\begin{equation*}
\text{tv}_{(t\to t+1)}(C_t)+D_t = \mathbb{E}\{ 1_{A_t}\cdot (V_{t+1}(RP_t^{(0)})+\text{tv}_{(t\to t+1)}(C_t^{(0)})-Y_{t+1})\mid\mathcal{F}_t\}\geq 0
\end{equation*}
by definition of the set $A_t^{(0)}$. The argument then proceeds similarly to (a).
\end{proof}
Next, we provide an upper bound on any solution of \eqref{def-limliab}.
\begin{lemma}\label{lemma-valueupperbound}
Any solution $RP_t$ to the acceptability condition \eqref{def-limliab} and equivalently \eqref{cond-expvaluewidetilderpt} satisfies
\begin{equation*}
\mathbb{E}\{ V_{t+1}(RP_t)\mid\mathcal{F}_t\}\leq \mathbb{E}\{ Y_{t+1}\mid\mathcal{F}_t\} + D_t.
\end{equation*}
\end{lemma}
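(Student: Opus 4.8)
The plan is to decompose the conditional expectation of $V_{t+1}(RP_t)$ along the partition $\{A_t,A_t^c\}$ and bound the two pieces separately: the piece over $A_t$ is handled by substituting the acceptability condition in its equivalent form \eqref{cond-expvaluewidetilderpt}, while the piece over $A_t^c$ is handled by the very inequality that defines $A_t$ in \eqref{def-at}.

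First I would write
\[
\mathbb{E}\{ V_{t+1}(RP_t)\mid\mathcal{F}_t\}
= \mathbb{E}\{ 1_{A_t}\cdot V_{t+1}(RP_t)\mid\mathcal{F}_t\}
+ \mathbb{E}\{ 1_{A_t^c}\cdot V_{t+1}(RP_t)\mid\mathcal{F}_t\}.
\]
For the first summand I would simply insert \eqref{cond-expvaluewidetilderpt}, which turns it into $\mathbb{E}\{ 1_{A_t}\cdot Y_{t+1}\mid\mathcal{F}_t\}+(1-\gamma_t)\cdot\text{tv}_{(t\to t+1)}(C_t)+D_t$. For the second summand I would use that, by \eqref{def-at}, on $A_t^c$ one has $V_{t+1}(RP_t) < Y_{t+1}-\text{tv}_{(t\to t+1)}(C_t)$; since $C_t$ and hence $\text{tv}_{(t\to t+1)}(C_t)$ are $\mathcal{F}_t$-measurable and $\mathbb{E}\{1_{A_t^c}\mid\mathcal{F}_t\}=1-\gamma_t$, this gives
\[
\mathbb{E}\{ 1_{A_t^c}\cdot V_{t+1}(RP_t)\mid\mathcal{F}_t\}
\le \mathbb{E}\{ 1_{A_t^c}\cdot Y_{t+1}\mid\mathcal{F}_t\}-(1-\gamma_t)\cdot\text{tv}_{(t\to t+1)}(C_t).
\]
Adding the two bounds, the terms $\pm(1-\gamma_t)\cdot\text{tv}_{(t\to t+1)}(C_t)$ cancel and $1_{A_t}Y_{t+1}+1_{A_t^c}Y_{t+1}=Y_{t+1}$, leaving $\mathbb{E}\{ Y_{t+1}\mid\mathcal{F}_t\}+D_t$, as claimed.

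There is no real obstacle here; the only points needing a little care are (i) using the equivalent form \eqref{cond-expvaluewidetilderpt} of the acceptability condition rather than \eqref{def-limliab} directly, (ii) noting the $\mathcal{F}_t$-measurability of $C_t$ so that $\text{tv}_{(t\to t+1)}(C_t)$ factors out of the conditional expectation against $1_{A_t^c}$, and (iii) the implicit standing integrability assumptions that make all the conditional expectations well-defined. The inequality in the conclusion is weak rather than strict precisely because $A_t^c$ may have conditional probability zero (e.g. in the perfectly replicable case), in which event equality holds.
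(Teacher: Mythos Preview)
Your proof is correct and follows essentially the same approach as the paper: the paper's argument is exactly to take the inequality $1_{A_t^c}\cdot V_{t+1}(RP_t) < 1_{A_t^c}\cdot Y_{t+1}-1_{A_t^c}\cdot \text{tv}_{(t\to t+1)}(C_t)$ from the definition of $A_t^c$, pass to conditional expectations, and add the result to \eqref{cond-expvaluewidetilderpt}. Your version merely makes the partition decomposition explicit beforehand, but the substance is identical.
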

\begin{proof}
By the definition \eqref{def-at} of $A_t$, we have on the complement $A_t^{c}$ of $A_t$,
\begin{equation*}
1_{A_t^{c}}\cdot V_{t+1}(RP_t)<  1_{A_t^{c}}\cdot Y_{t+1}-1_{A_t^{c}}\cdot \text{tv}_{(t\to t+1)}(C_t).
\end{equation*}
Taking the expected value conditional on $\mathcal{F}_t$ of this expression and adding the result to \eqref{cond-expvaluewidetilderpt}, we get the claimed inequality.
\end{proof}
If we assume in addition that the expected return on $RP_t$ over year $t$ is not less than the risk-free return, then we get from Lemma~\ref{lemma-valueupperbound} the recursive upper bound on $V_t(\mathcal{L})$:
\begin{equation}\label{expr-upperboundtrecurs}
V_t(\mathcal{L})= V_{t}(RP_t)\leq \text{pv}_{(t+1\to t)}\left(\mathbb{E}\{ X_{t}\mid\mathcal{F}_t\} + \mathbb{E}\{ V_{t+1}(\mathcal{L})\mid\mathcal{F}_t\}+D_t\right).
\end{equation}
Under suitable assumptions, we can derive a closed formula upper bound from this recursive inequality.
\begin{prop}\label{prop-upperboundonvalue}
Assume that \eqref{assumpt-forwardprice2} holds and that, for any $t\le s-1$,
\begin{equation}\label{assumpt-xsdsrtindep}
\mathbb{E}\left\{ X_s+D_s\mid \mathcal{F}_{t+1}\right\}\mbox{ and }R_{t+1}^{(s-t)}\mbox{ are independent conditional on $\mathcal{F}_t$}.
\end{equation}
Assume that, for any year $t$, the expected return on any reference portfolio $RP_t$ is larger than or equal to the risk-free return. Further assume that, for any $t$, the set $A_t$ is given by \eqref{def-at}. Then, the value $V_t(\mathcal{L})$ at time $t$ of the liability $\mathcal{L}$ is bounded above by
\begin{equation}\label{expr-upperboundt}
V_t(\mathcal{L})\leq \sum_{s=t}^T \text{pv}_{(s+1\to t)}\left(\mathbb{E}\left\{ X_s+D_s\mid \mathcal{F}_t\right\}\right).
\end{equation}
\end{prop}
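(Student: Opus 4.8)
The plan is to establish \eqref{expr-upperboundt} by backward induction on $t$, running from $t=T$ down to $0$, using as the engine the recursive inequality \eqref{expr-upperboundtrecurs} — which is available here precisely because the Proposition assumes that the expected return on every reference portfolio $RP_t$ dominates the risk-free return, and because the sets $A_t$ are assumed to be given by \eqref{def-at} so that Lemma~\ref{lemma-valueupperbound} applies. For the base case $t=T$ one has $V_{T+1}(\mathcal{L})=0$ and $D_T$ is $\mathcal{F}_T$-measurable, so \eqref{expr-upperboundtrecurs} reads $V_T(\mathcal{L})\le \text{pv}_{(T+1\to T)}(\mathbb{E}\{X_T+D_T\mid\mathcal{F}_T\})$, which is exactly the right-hand side of \eqref{expr-upperboundt} with its single term $s=T$.

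For the inductive step, assume \eqref{expr-upperboundt} holds at time $t+1$, i.e. $V_{t+1}(\mathcal{L})\le U_{t+1}:=\sum_{s=t+1}^{T}\text{pv}_{(s+1\to t+1)}(\mathbb{E}\{X_s+D_s\mid\mathcal{F}_{t+1}\})$. Taking conditional expectations given $\mathcal{F}_t$ (monotonicity), feeding the result into \eqref{expr-upperboundtrecurs}, and using that $\text{pv}_{(t+1\to t)}$ is multiplication by the positive $\mathcal{F}_t$-measurable scalar $(1+R_t^{(1)})^{-1}$, one gets $V_t(\mathcal{L})\le \text{pv}_{(t+1\to t)}(\mathbb{E}\{X_t+D_t\mid\mathcal{F}_t\})+\text{pv}_{(t+1\to t)}(\mathbb{E}\{U_{t+1}\mid\mathcal{F}_t\})$. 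The first summand is the $s=t$ term of \eqref{expr-upperboundt}, so it remains to dominate the second summand by $\sum_{s=t+1}^{T}\text{pv}_{(s+1\to t)}(\mathbb{E}\{X_s+D_s\mid\mathcal{F}_t\})$, which by linearity reduces to the single-index estimate
\[
\text{pv}_{(t+1\to t)}\Big(\mathbb{E}\big\{\text{pv}_{(s+1\to t+1)}(\mathbb{E}\{X_s+D_s\mid\mathcal{F}_{t+1}\})\mid\mathcal{F}_t\big\}\Big)\ \le\ \text{pv}_{(s+1\to t)}\big(\mathbb{E}\{X_s+D_s\mid\mathcal{F}_t\}\big),\qquad s\ge t+1.
\]
To prove it, write $\text{pv}_{(s+1\to t+1)}(\cdot)=(1+R_{t+1}^{(s-t)})^{-(s-t)}\cdot(\cdot)$; by assumption \eqref{assumpt-xsdsrtindep} the $\mathcal{F}_{t+1}$-measurable factor $\mathbb{E}\{X_s+D_s\mid\mathcal{F}_{t+1}\}$ and the discount factor $(1+R_{t+1}^{(s-t)})^{-(s-t)}$ are independent given $\mathcal{F}_t$, so the inner conditional expectation factorizes, and the tower property collapses $\mathbb{E}\{\mathbb{E}\{X_s+D_s\mid\mathcal{F}_{t+1}\}\mid\mathcal{F}_t\}$ to $\mathbb{E}\{X_s+D_s\mid\mathcal{F}_t\}$. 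Applying $\text{pv}_{(t+1\to t)}$ and then the inequality obtained from \eqref{eq-noarbitrageforward} together with \eqref{assumpt-forwardprice2} for $m=s-t$, namely $(1+R_t^{(1)})^{-1}\mathbb{E}\{(1+R_{t+1}^{(s-t)})^{-(s-t)}\mid\mathcal{F}_t\}\le(1+R_t^{(s-t+1)})^{-(s-t+1)}$, yields the claim, provided $\mathbb{E}\{X_s+D_s\mid\mathcal{F}_t\}\ge 0$ (which holds since $D_s\ge 0$ and the expected net liability cash-flows are nonnegative). Recombining the $s=t$ term with the sum over $s\ge t+1$ gives \eqref{expr-upperboundt} at time $t$, closing the induction.

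The main obstacle is exactly that last estimate: one must compose a one-year discount factor that is $\mathcal{F}_t$-measurable with an $(s-t)$-year discount factor that is only $\mathcal{F}_{t+1}$-measurable, hence random as seen from time $t$. The independence hypothesis \eqref{assumpt-xsdsrtindep} is what allows the relevant conditional expectation to split into a product of an expected discount factor and an expected cash-flow, and the no-arbitrage/liquidity-premium inequality \eqref{assumpt-forwardprice2} (through \eqref{eq-noarbitrageforward}) is what makes the product of the two expected one-period discount factors dominate — in the correct direction — the direct multi-year discount factor; the sign remark on $\mathbb{E}\{X_s+D_s\mid\mathcal{F}_t\}$ is what makes multiplying the discount-factor inequality through by it preserve the direction. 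Everything else is routine backward recursion and linearity of $\text{pv}_{(t+1\to t)}$ and of conditional expectation.
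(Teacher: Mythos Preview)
Your proposal is correct and follows essentially the same route as the paper: backward induction on $t$ driven by the recursive bound \eqref{expr-upperboundtrecurs}, with the inductive step handled term-by-term via the conditional-independence assumption \eqref{assumpt-xsdsrtindep} to factor the expectation and then the discount-factor inequality derived from \eqref{assumpt-forwardprice2} and \eqref{eq-noarbitrageforward}. Your write-up is in fact more explicit than the paper's, and your remark that one needs $\mathbb{E}\{X_s+D_s\mid\mathcal{F}_t\}\ge 0$ to preserve the direction of the inequality when multiplying through is a genuine detail the paper passes over in silence.
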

\begin{proof}
We proceed by induction backwards in time, starting from $t=T$. For $t=T$, the claim is given by \eqref{expr-upperboundtrecurs}, since $V_{T+1}(\mathcal{L})=0$. Now assume that \eqref{expr-upperboundt} holds for $t+1$, i.e., using the notation \eqref{def-annualriskfreerate},
\begin{equation}\label{expr-upperboundtp1}
V_{t+1}(\mathcal{L})\leq \sum_{s=t+1}^T \left(1+R_{t+1}^{(s-t)}\right)^{-(s-t)}\cdot\mathbb{E}\left\{ X_s+D_s\mid \mathcal{F}_{t+1}\right\}.
\end{equation}
The recursive upper bound \eqref{expr-upperboundtrecurs} for $t$ can be written
\begin{equation*}
V_t(\mathcal{L})\leq (1+R_t^{(1)})^{-1}\cdot \mathbb{E}\{X_{t}+D_t\mid\mathcal{F}_t\} +(1+R_t^{(1)})^{-1}\cdot \mathbb{E}\{V_{t+1}(\mathcal{L})\mid\mathcal{F}_t\}.
\end{equation*}
Inserting \eqref{expr-upperboundtp1} into this inequality, using \eqref{assumpt-xsdsrtindep} and applying first \eqref{assumpt-forwardprice2} and then \eqref{eq-noarbitrageforward} proves \eqref{expr-upperboundt}.
\end{proof}

If we assume that the reference market consists only of risk-free zero-coupon bonds, then the acceptability condition \eqref{cond-expvaluewidetilderpt} on the reference portfolio $RP_t$ explicitly determines the reference portfolio, given $Y_{t+1}$, $C_t$ and $D_t$, and provided that $\gamma_t >0$. In fact, $V_{t+1}(RP_t)= \text{tv}_{(t\to t+1)}(V_{t}(RP_t))$ is then $\mathcal{F}_t$-measurable, so it can be taken out of the expectation in \eqref{cond-expvaluewidetilderpt}, and we get
\begin{equation}\label{expr-vtrptriskfree}
\gamma_t\cdot \text{tv}_{(t\to t+1)}(V_{t}(RP_t))=\mathbb{E}\{ 1_{A_t}\cdot Y_{t+1}\}+(1-\gamma_t)\cdot \text{tv}_{(t\to t+1)}(C_t)+D_t.
\end{equation}
This result can be refined for the special case that the capital $C_t$ is defined in line with Solvency II by \eqref{def-ctinlemma1} to derive an explicit recursive expression for the value of $\mathcal{L}$.
\begin{theorem}\label{prop-valueforriskfree}
Assume that the set $A_t$ is given by \eqref{def-at} and the capital $C_t$ by \eqref{def-ctinlemma1}. Further assume that the reference market consists only of the risk-free zero-coupon bonds. Then, the value $V_t(\mathcal{L})$ at time $t$ of the liability $\mathcal{L}$ is uniquely determined by the recursive expression
\begin{equation}\label{expr-v0mathcalyt}
V_t(\mathcal{L})=\text{pv}_{(t+1\to t)}\left(\mathbb{E}\{ 1_{A_t}\cdot Y_{t+1}\mid\mathcal{F}_t\}+(1-\gamma_t)\cdot \rho\{ -Y_{t+1}\mid\mathcal{F}_t\}+D_t\right),
\end{equation}
where $D_t$ is an eligible dividend from \eqref{def-limliab}. The set $A_t$ and the capital $C_t$ can be written as
\begin{eqnarray}
A_t &  = &  \left\{ Y_{t+1}\le \rho\{ -Y_{t+1}\mid\mathcal{F}_t\}\right\}\label{expr-at},\\
C_t &  = &  \text{pv}_{(t+1\to t)}\left(\rho\{ -Y_{t+1}\mid\mathcal{F}_t\}\right)-V_t(\mathcal{L}).\nonumber
\end{eqnarray}
\end{theorem}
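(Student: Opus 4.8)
The plan is to use the two features special to this setting in order to break the circular dependence of $C_t$, $A_t$ and $\gamma_t$ on the unknown $V_t(\mathcal{L})$. The first feature is that, when the reference market contains only risk-free zero-coupon bonds, the reference portfolio carries no interest-rate risk over $[t,t+1]$, so $V_{t+1}(RP_t)=\text{tv}_{(t\to t+1)}(V_t(RP_t))$ is $\mathcal{F}_t$-measurable (this is the observation recorded just before the theorem, cf.\ \eqref{expr-vtrptriskfree}). The second is that the risk measure $\rho$ in \eqref{def-ctinlemma1} is translation-invariant. Writing $v:=V_t(RP_t)=V_t(\mathcal{L})$ (using \eqref{def-valuebyrpt}), I expect these two facts to turn $C_t$, $A_t$ and $\gamma_t$ into explicit functions of $v$, so that the acceptability condition \eqref{cond-expvaluewidetilderpt} reduces to a scalar equation for $v$.

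First I would simplify the capital. Since $V_{t+1}(RP_t)$ is $\mathcal{F}_t$-measurable, translation-invariance gives $\rho\{V_{t+1}(RP_t)-Y_{t+1}\mid\mathcal{F}_t\}=\rho\{-Y_{t+1}\mid\mathcal{F}_t\}-V_{t+1}(RP_t)$; applying $\text{pv}_{(t+1\to t)}$ to \eqref{def-ctinlemma1} and using $\text{pv}_{(t+1\to t)}(V_{t+1}(RP_t))=v$ then gives $C_t=\text{pv}_{(t+1\to t)}(\rho\{-Y_{t+1}\mid\mathcal{F}_t\})-v$, the stated formula for $C_t$. Applying $\text{tv}_{(t\to t+1)}$ and adding back $V_{t+1}(RP_t)=\text{tv}_{(t\to t+1)}(v)$, the two $v$-terms cancel and $\text{tv}_{(t\to t+1)}(C_t)+V_{t+1}(RP_t)=\rho\{-Y_{t+1}\mid\mathcal{F}_t\}$; inserting this into \eqref{def-at} yields \eqref{expr-at} and shows, crucially, that $A_t$ and hence $\gamma_t=\mathbb{P}\{Y_{t+1}\le\rho\{-Y_{t+1}\mid\mathcal{F}_t\}\mid\mathcal{F}_t\}$ do not depend on $v$ at all.

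Next I would substitute these expressions into \eqref{cond-expvaluewidetilderpt}. On the left, $\mathcal{F}_t$-measurability of $V_{t+1}(RP_t)$ turns $\mathbb{E}\{1_{A_t}V_{t+1}(RP_t)\mid\mathcal{F}_t\}$ into $\gamma_t\,\text{tv}_{(t\to t+1)}(v)$; on the right, $(1-\gamma_t)\text{tv}_{(t\to t+1)}(C_t)$ contributes $-(1-\gamma_t)\text{tv}_{(t\to t+1)}(v)$ plus a known term. The two occurrences of $v$ recombine with total coefficient $\gamma_t+(1-\gamma_t)=1$ — so that here, unlike in \eqref{expr-vtrptriskfree} where $C_t$ was a free parameter, no division by $\gamma_t$ occurs — and solving gives $\text{tv}_{(t\to t+1)}(v)=\mathbb{E}\{1_{A_t}Y_{t+1}\mid\mathcal{F}_t\}+(1-\gamma_t)\rho\{-Y_{t+1}\mid\mathcal{F}_t\}+D_t$, i.e.\ \eqref{expr-v0mathcalyt}.

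The step I expect to be the real obstacle is uniqueness, because the right-hand side of \eqref{expr-v0mathcalyt} still contains $D_t$, and an eligible dividend is a function of $C_t=K-v$, where $K:=\text{pv}_{(t+1\to t)}(\rho\{-Y_{t+1}\mid\mathcal{F}_t\})$ is now a fixed $\mathcal{F}_t$-measurable quantity. Thus \eqref{expr-v0mathcalyt} is really the fixed-point equation $v=M+\text{pv}_{(t+1\to t)}(D_t(K-v))$ for $v$, with $M:=\text{pv}_{(t+1\to t)}(\mathbb{E}\{1_{A_t}Y_{t+1}\mid\mathcal{F}_t\}+(1-\gamma_t)\rho\{-Y_{t+1}\mid\mathcal{F}_t\})$ fixed. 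I would close it by a monotonicity/intermediate-value argument: since $Y_{t+1}\le\rho\{-Y_{t+1}\mid\mathcal{F}_t\}$ on $A_t$, one has $\mathbb{E}\{1_{A_t}Y_{t+1}\mid\mathcal{F}_t\}\le\gamma_t\rho\{-Y_{t+1}\mid\mathcal{F}_t\}$ and hence $M\le K$; the map $v\mapsto v-M-\text{pv}_{(t+1\to t)}(D_t(K-v))$ is continuous (eligibility), strictly increasing on the admissible range $v\le K$ (i.e.\ $C_t=K-v\ge 0$, using that $D_t$ is monotone in $C_t$), is $\le 0$ at $v=M$ (as $D_t\ge 0$) and equals $K-M\ge 0$ at $v=K$ (as $D_t=0$ when $C_t=0$), so it has a unique zero, necessarily in $[M,K]$. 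This determines $V_t(\mathcal{L})=v$ uniquely and makes $C_t=K-v\ge 0$ automatic, giving the displayed formula for $C_t$; uniqueness of the value itself then follows since, by the zero-coupon-bond assumption, every admissible reference portfolio has exactly this time-$t$ market price. Worth flagging is that the whole reduction rests on the $\mathcal{F}_t$-measurability of $V_{t+1}(RP_t)$: this is the single place where the hypothesis "reference market $=$ risk-free zero-coupon bonds" enters, and it is what removes the freedom in the choice of $A_t$ left open in Proposition~\ref{lemma-existenceofsolution}.
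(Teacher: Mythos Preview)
Your derivation follows exactly the paper's route: use translation-invariance of $\rho$ together with the $\mathcal{F}_t$-measurability of $V_{t+1}(RP_t)$ to obtain \eqref{expr-at} and the stated formula for $C_t$, then substitute into the acceptability condition (the paper inserts into the pre-derived identity \eqref{expr-vtrptriskfree}, you go back to \eqref{cond-expvaluewidetilderpt} directly, which amounts to the same computation and the same cancellation of $\gamma_t$). Where you go beyond the paper is the uniqueness discussion: the paper's proof stops once \eqref{expr-v0mathcalyt} is displayed and does not address the fact that $D_t$ still depends on $v$ through $C_t=K-v$; your monotonicity/intermediate-value argument on $v\mapsto v-M-\text{pv}_{(t+1\to t)}(D_t(K-v))$, using the eligibility properties of $D_t$ and the inequality $M\le K$, is a genuine addition that makes the claimed uniqueness rigorous rather than implicit.
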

\begin{proof}
As the risk measure $\rho$ is translation-invariant, the capital from \eqref{def-ctinlemma1} is given by
\begin{equation}\label{eq-ctbyrho}
\text{tv}_{(t\to t+1)}(C_t)=\rho\{ -Y_{t+1}\mid\mathcal{F}_t\}-\text{tv}_{(t\to t+1)}(V_{t}(\mathcal{L})),
\end{equation}
so the set $A_t$ is given by \eqref{expr-at}. Inserting \eqref{eq-ctbyrho} into the expression \eqref{expr-vtrptriskfree} for $V_t(RP_t)$ then proves \eqref{expr-v0mathcalyt} as, by definition, $V_t(\mathcal{L})=V_t(RP_t)$.
\end{proof}
Note that it is not obvious how to derive a reasonable closed formula expression from the recursive expression \eqref{expr-v0mathcalyt} because of the $\rho$-term.

A more concise expression for \eqref{expr-v0mathcalyt} can be given if we define the random variable $Z_{t+1}$ as the "cut-off" of $Y_{t+1}$,
\begin{equation*}
Z_{t+1}:=1_{A_t}\cdot Y_{t+1}+1_{A_t^c}\cdot \rho\{ -Y_{t+1}\mid\mathcal{F}_t\}.
\end{equation*}
Then the recursion \eqref{expr-v0mathcalyt} can be written
\begin{equation*}
V_t(\mathcal{L})=\text{pv}_{(t+1\to t)}\left(\mathbb{E}\{ Z_{t+1}\mid\mathcal{F}_t\}+D_t\right).
\end{equation*}

\section{The risk margin}\label{sec-rm}

Recall from the introduction \eqref{def-mcvl} the idea of defining the value of an insurance liability $\mathcal{L}$ by the sum of a "best estimate", which we interpret (maybe more generally than in Solvency II) as the market price of a reference portfolio, and a risk margin corresponding to capital costs. The idea is that the risk margin accounts for the non-hedgeable part of the cash-flows of the liability $\mathcal{L}$ to be valued. However, in the preceding part of the paper, a split in best estimate and risk margin was never required. Moreover, the definition of the risk margin is ambiguous in the context of dynamic multi-period replication, because there are two conflicting intentions: on the one hand, the "best estimate" is thought to capture only the contractual cash-flows of the insurance liability and not capital costs. On the other hand, the "best estimate" should capture the hedgeable part over a one-year time period, which in general partially includes also future capital costs.

We show in the following that it is possible to define a risk margin, and to use it to derive an upper bound on the value. However, the risk margin we define depends on certain assumptions, and other definitions of a risk margin would also be possible.

To define the risk margin, the idea is to split the reference portfolio $RP_t$ into a reference portfolio $\tilde{RP}_t$, whose market price is the "best estimate", and a portfolio we call a "dividend portfolio" $DP_t$, such that the dividend portfolio accounts for all capital costs, and its market price corresponds to the risk margin. So $RP_t$ consists of the two portfolios $\tilde{RP}_t$ and $DP_t$ and, since market prices are additive, the value can then be written as
\begin{equation*}
V_t(\mathcal{L})=V_{t}(RP_t)=V_{t}(\tilde{RP}_t)+V_{t}(DP_t).
\end{equation*}
We assume that $DP_t$ consists of a risk-free one-year zero-coupon bond (compare to the formula \eqref{def-riskmargininitial} for the risk margin in Solvency II). Then, $Y_{t+1}$ from \eqref{def-ytp1} can be written as
\begin{equation}\label{eq-ytp1forsplit}
Y_{t+1}  =   X_t+V_{t+1}(\tilde{RP}_{t+1})+V_{t+1}(DP_{t+1}).
\end{equation}
For deriving the split, the reference portfolios $\tilde{RP}_t$ for every year $t$ are determined first; they account for all future cash-flows $(X_s)_{s\geq t}$ of $\mathcal{L}$ and disregard limited liability. That is, they disregard the fact that the replication cannot always be continued. The dividend portfolios $DP_t$ are constructed afterwards.

Define $\overline{RM}_t$ be the "expected risk margin" at time $t$,
\begin{equation*}
\overline{RM}_t:=\sum_{s=t}^{T}\text{pv}_{(s+1\to t)}\left(\mathbb{E}\left\{ D_s\mid \mathcal{F}_t\right\}\right),
\end{equation*}
where $D_s$ for $s\geq t$ are eligible dividends from \eqref{def-limliab}.

We now prove one of our main results, which is that the sum of the market price of the reference portfolio $\tilde{RP}_t$ and the "expected risk margin" $\overline{RM}_t$ is an upper bound for the value of $\mathcal{L}$, see \eqref{expr-upperboundt2} as well as Corollary~\ref{cor-upperboundriskmargin} below.

\begin{theorem}\label{prop-upperboundonvaluedividendpf}
Let the set $A_t$ on which the replication can be continued be given by \eqref{def-at},
\begin{equation*}
A_t  = \left\{ Y_{t+1}\le \text{tv}_{(t\to t+1)}(C_t)+V_{t+1}(RP_{t})\right\}.
\end{equation*}
Assume that the reference portfolio $RP_t$ consists of a dividend portfolio $DP_t$ in the form of a one-year risk-free zero-coupon bond and a reference portfolio $\tilde{RP}_t$, and that the acceptability condition \eqref{def-limliab} is satisfied for a given eligible dividend $D_t\geq 0$.
\begin{itemize}
\item[(a)] Assume that \eqref{assumpt-forwardprice2} holds, that the portfolio $\tilde{RP}_t$ satisfies
    \begin{equation}\label{eq-exporpt}
    \mathbb{E}\{ V_{t+1}(\tilde{RP}_t) \mid\mathcal{F}_t\}\geq \mathbb{E}\{ X_t+V_{t+1}(\tilde{RP}_{t+1}) \mid\mathcal{F}_t\},
    \end{equation}
    and that, for any $t\le s-1$,
    \begin{equation}\label{assumpt-xsdsrtindep2}
    \mathbb{E}\left\{ D_s\mid \mathcal{F}_{t+1}\right\}\mbox{ and }R_{t+1}^{(s-t)}\mbox{ are independent conditional on $\mathcal{F}_t$}.
    \end{equation}

    Then, the value at time $t$ of the liability $\mathcal{L}$ satisfies
    \begin{equation}\label{expr-upperboundt2}
    V_t(\mathcal{L})= V_t(\tilde{RP}_t)+V_t(DP_t)\;\;\mbox{with}\;\;\; V_t(DP_t)\leq \overline{RM}_t.
    \end{equation}
\item[(b)] Let the capital $C_t$ be given by \eqref{def-ctinlemma1} for a translation-invariant risk measure $\rho$,
    \begin{equation*}
    C_t=\text{pv}_{(t+1\to t)}\left(\rho\{ V_{t+1}(RP_t)-Y_{t+1}\mid\mathcal{F}_t\}\right).
    \end{equation*}
    Then $DP_t$ is given by the recursive expression
    \begin{eqnarray}
    & & V_t(DP_t)=\text{pv}_{(t+1\to t)}\left(\mathbb{E}\{ 1_{A_t}\cdot ( Y_{t+1}-V_{t+1}(\tilde{RP}_t))\mid\mathcal{F}_t\}\right)+\label{expr-dptrecursive}\\
    & & + \text{pv}_{(t+1\to t)}\left((1-\gamma_t)\cdot \rho\{ V_{t+1}(\tilde{RP}_t)-Y_{t+1}\mid\mathcal{F}_t\}+D_t\right)\nonumber ,
    \end{eqnarray}
    where
    \begin{eqnarray}
    C_t &  = &  \text{pv}_{(t+1\to t)}\left(\rho\{ V_{t+1}(\tilde{RP}_t)-Y_{t+1}\mid\mathcal{F}_t\}\right)-V_{t}(DP_{t})\label{expr-ctrm},\\
    A_t &  = &  \left\{ Y_{t+1}-V_{t+1}(\tilde{RP}_{t})\le \rho\{ V_{t+1}(\tilde{RP}_{t}) -Y_{t+1}\mid\mathcal{F}_t\}\right\}\nonumber .
    \end{eqnarray}
\end{itemize}
\end{theorem}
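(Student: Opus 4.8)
I would reduce both parts to a single identity for the value $V_t(DP_t)$ of the dividend portfolio, and then, for part (a), run the same backward induction used in the proof of Proposition~\ref{prop-upperboundonvalue}; part (b) is then pure bookkeeping with translation-invariance. The decomposition $V_t(\mathcal{L})=V_t(RP_t)=V_t(\tilde{RP}_t)+V_t(DP_t)$ is immediate from additivity of reference-market prices, so everything comes down to $V_t(DP_t)$. Since $DP_t$ is a one-year risk-free zero-coupon bond, $V_{t+1}(DP_t)=\text{tv}_{(t\to t+1)}(V_t(DP_t))$ is $\mathcal{F}_t$-measurable; substituting $V_{t+1}(RP_t)=V_{t+1}(\tilde{RP}_t)+V_{t+1}(DP_t)$ into the acceptability condition in the form \eqref{cond-expvaluewidetilderpt} and pulling that $\mathcal{F}_t$-measurable factor out of $\mathbb{E}\{1_{A_t}\cdot(\,\cdot\,)\mid\mathcal{F}_t\}$ gives
\begin{equation}\label{eq-dpidentityplan}
\gamma_t\,V_{t+1}(DP_t)=\mathbb{E}\{1_{A_t}\cdot(Y_{t+1}-V_{t+1}(\tilde{RP}_t))\mid\mathcal{F}_t\}+(1-\gamma_t)\,\text{tv}_{(t\to t+1)}(C_t)+D_t.
\end{equation}

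\textbf{Part (a).}
On $A_t^c$ the definition \eqref{def-at} of $A_t$ gives $\text{tv}_{(t\to t+1)}(C_t)\le Y_{t+1}-V_{t+1}(\tilde{RP}_t)-V_{t+1}(DP_t)$; multiplying by $1_{A_t^c}$, taking $\mathbb{E}\{\cdot\mid\mathcal{F}_t\}$ and inserting into \eqref{eq-dpidentityplan}, the $1_{A_t}$- and $1_{A_t^c}$-expectations combine to $\mathbb{E}\{Y_{t+1}-V_{t+1}(\tilde{RP}_t)\mid\mathcal{F}_t\}$ while the two $V_{t+1}(DP_t)$-terms add to $V_{t+1}(DP_t)$ (so no division by $\gamma_t$ is needed), leaving $V_{t+1}(DP_t)\le\mathbb{E}\{Y_{t+1}-V_{t+1}(\tilde{RP}_t)\mid\mathcal{F}_t\}+D_t$. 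Expanding $Y_{t+1}$ by \eqref{eq-ytp1forsplit} and using \eqref{eq-exporpt} to discard the $\tilde{RP}$-terms yields the recursion $V_t(DP_t)\le\text{pv}_{(t+1\to t)}(\mathbb{E}\{V_{t+1}(DP_{t+1})\mid\mathcal{F}_t\}+D_t)$. I then induct backwards from $t=T$: as $\tilde{RP}_{T+1}$ covers only the vanishing cash-flows $(X_s)_{s\ge T+1}$ and $V_{T+1}(\mathcal{L})=0$, we have $V_{T+1}(DP_{T+1})=0$, whence $V_T(DP_T)\le\overline{RM}_T$; for the step I substitute $V_{t+1}(DP_{t+1})\le\overline{RM}_{t+1}$, factorize each $\mathbb{E}\{(1+R_{t+1}^{(s-t)})^{-(s-t)}\,\mathbb{E}\{D_s\mid\mathcal{F}_{t+1}\}\mid\mathcal{F}_t\}$ via \eqref{assumpt-xsdsrtindep2}, and bound the product of the one-year and $(s-t)$-year discount factors by the $(s-t+1)$-year factor using \eqref{assumpt-forwardprice2} and the no-arbitrage relation \eqref{eq-noarbitrageforward}, exactly as in the proof of Proposition~\ref{prop-upperboundonvalue}; collecting terms and using $\text{pv}_{(t+1\to t)}(D_t)=\text{pv}_{(t+1\to t)}(\mathbb{E}\{D_t\mid\mathcal{F}_t\})$ gives $V_t(DP_t)\le\overline{RM}_t$. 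I expect this last step — pushing a discount factor out through the conditional expectation — to be the only genuine obstacle, and it is where \eqref{assumpt-forwardprice2}, \eqref{eq-exporpt} and \eqref{assumpt-xsdsrtindep2} are all consumed.

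\textbf{Part (b).}
With $C_t$ given by \eqref{def-ctinlemma1}, translation-invariance of $\rho$ together with $\mathcal{F}_t$-measurability of $V_{t+1}(DP_t)$ gives $\text{tv}_{(t\to t+1)}(C_t)+V_{t+1}(DP_t)=\rho\{V_{t+1}(\tilde{RP}_t)-Y_{t+1}\mid\mathcal{F}_t\}$, which is the asserted formula for $C_t$ after applying $\text{pv}_{(t+1\to t)}$, and substituting it into \eqref{def-at} yields the asserted formula for $A_t$. Replacing $(1-\gamma_t)\,\text{tv}_{(t\to t+1)}(C_t)$ in \eqref{eq-dpidentityplan} by $(1-\gamma_t)\big(\rho\{V_{t+1}(\tilde{RP}_t)-Y_{t+1}\mid\mathcal{F}_t\}-V_{t+1}(DP_t)\big)$, the $V_{t+1}(DP_t)$-terms again combine to $V_{t+1}(DP_t)$, and applying $\text{pv}_{(t+1\to t)}$ gives \eqref{expr-dptrecursive}. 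This part is routine bookkeeping.
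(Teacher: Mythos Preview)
Your proposal is correct and follows essentially the same route as the paper. For part~(a) the paper simply quotes Lemma~\ref{lemma-valueupperbound} to obtain $\mathbb{E}\{V_{t+1}(RP_t)\mid\mathcal{F}_t\}\le\mathbb{E}\{Y_{t+1}\mid\mathcal{F}_t\}+D_t$, whereas you rederive that inequality in situ by splitting off the $\mathcal{F}_t$-measurable term $V_{t+1}(DP_t)$ first and then using the $A_t^c$-inequality --- but this is literally the proof of Lemma~\ref{lemma-valueupperbound} specialised to the present decomposition, so the argument is the same. Both proofs then insert \eqref{eq-ytp1forsplit}, use \eqref{eq-exporpt} to eliminate the $\tilde{RP}$-terms, arrive at the recursion \eqref{expr-recboundondptp1}, and finish by the backward induction of Proposition~\ref{prop-upperboundonvalue} via \eqref{assumpt-forwardprice2}, \eqref{eq-noarbitrageforward} and \eqref{assumpt-xsdsrtindep2}. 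For part~(b) both proofs invoke $\mathcal{F}_t$-measurability of $V_{t+1}(DP_t)$ and translation-invariance of $\rho$; your write-up just spells out the bookkeeping that the paper leaves as ``similar arguments as Theorem~\ref{prop-valueforriskfree}''.
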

\begin{proof}
To prove \emph{(a)}, we need to prove \eqref{expr-upperboundt2}, i.e. $V_t(DP_t)\leq \overline{RM}_t$. To this end, we first derive a recursive expression for $V_t(DP_t)$. In fact, from Lemma~\ref{lemma-valueupperbound}, we have
\begin{equation*}
\mathbb{E}\left\{ V_{t+1}(RP_t)\mid \mathcal{F}_{t}\right\}\leq \mathbb{E}\left\{ Y_{t+1}\mid \mathcal{F}_{t}\right\}+D_t.
\end{equation*}
Inserting $\tilde{RP}_t$, $DP_t$, and $Y_{t+1}$ from \eqref{eq-ytp1forsplit} into this inequality and using \eqref{eq-exporpt}, we get, since $DP_t$ consists of a one-year risk-free zero-coupon bond,
\begin{equation}\label{expr-recboundondptp1}
V_{t+1}(DP_t)=\mathbb{E}\left\{ V_{t+1}(DP_t)\mid \mathcal{F}_{t}\right\}\leq \mathbb{E}\left\{ V_{t+1}(DP_{t+1})\mid \mathcal{F}_{t}\right\}+D_t.
\end{equation}
This implies the recursive upper bound on $V_t(DP_t)$,
\begin{equation*}
V_{t}(DP_t)\leq \text{pv}_{(t+1\to t)}\left(\mathbb{E}\left\{ V_{t+1}(DP_{t+1})\mid \mathcal{F}_{t}\right\}+D_t\right).
\end{equation*}
Arguing as in the proof of Proposition~\ref{prop-upperboundonvalue} and using \eqref{assumpt-xsdsrtindep2}, the upper bound \eqref{expr-upperboundt2} then follows.

To prove \emph{(b)}, the recursive expression \eqref{expr-dptrecursive} follows from the acceptability condition \eqref{cond-expvaluewidetilderpt} by similar arguments as Theorem~\ref{prop-valueforriskfree}, using that $V_{t+1}(DP_t)$ is $\mathcal{F}_t$-measurable and that $\rho$ is translation-invariant.
\end{proof}

\begin{remark}\label{rem-optimalbound}
Note that the upper bound $V_t(DP_t)\leq \overline{RM}_t$ from Theorem~\ref{prop-upperboundonvaluedividendpf} (a) holds regardless of whether equality holds in \eqref{eq-exporpt} or not. This means that the upper bound $\overline{RM}_t$ on the value $V_t(DP_t)$ is not affected by the selection of $\tilde{RP}_t$ subject to \eqref{eq-exporpt}, although $V_t(DP_t)$ is. In order to obtain the most useful upper bound on the value, one should thus select $\tilde{RP}_t$ as that reference portfolio satisfying \eqref{eq-exporpt} which minimizes $V_{t}(\tilde{RP}_t)$.
\end{remark}
\begin{remark}
The upper bound from Theorem~\ref{prop-upperboundonvaluedividendpf} (a) holds in particular if the portfolio $\tilde{RP}_t$ is assumed to consist of a one-year risk-free zero-coupon bond and is defined to be the reference portfolio matching the expected values of the cash-flows $(X_s)_{s\geq t}$ of the liability $\mathcal{L}$ to be valued by risk-free zero-coupon bonds, i.e. if the value $V_t(\tilde{RP}_t)$ is given by
\begin{equation*}
V_t(\tilde{RP}_t)=\sum_{s=t}^{T}\text{pv}_{(s+1\to t)}\left(\mathbb{E}\left\{ X_s\mid \mathcal{F}_t\right\}\right).
\end{equation*}
This is also the reference portfolio which is "optimal" in the sense of minimizing $V_{t}(\tilde{RP}_t)$ as in Remark~\ref{rem-optimalbound} and for which equality holds in \eqref{eq-exporpt}.
\end{remark}
If we assume that the dividend $D_t$ is given as in Solvency II by a constant cost of capital rate applied to the capital, see \eqref{def-dividendtbycoc}, then we get the following corollary.
\begin{corollary}\label{cor-upperboundriskmargin}
Assume that \eqref{assumpt-forwardprice2} holds. Assume that the set $A_t$ on which the replication can be continued is given by \eqref{def-at},
\begin{equation*}
A_t  = \left\{ Y_{t+1}\le \text{tv}_{(t\to t+1)}(C_t)+V_{t+1}(RP_{t})\right\},
\end{equation*}
with the capital $C_t$ given by \eqref{def-ctinlemma1} for a translation-invariant risk measure $\rho$,
\begin{equation*}
C_t=\text{pv}_{(t+1\to t)}\left(\rho\{ V_{t+1}(RP_t)-Y_{t+1}\mid\mathcal{F}_t\}\right).
\end{equation*}
Assume that the acceptability condition \eqref{def-limliab} is satisfied for the dividend $D_t$ given by \eqref{def-dividendtbycoc}, i.e. $D_t=\eta\cdot C_t$ for some $\eta>0$, and that the reference portfolio $RP_t$ consists of a dividend portfolio $DP_t$ in the form of a one-year risk-free zero-coupon bond and a reference portfolio $\tilde{RP}_t$ satisfying \eqref{eq-exporpt},
\begin{equation*}
\mathbb{E}\{ V_{t+1}(\tilde{RP}_t) \mid\mathcal{F}_t\}\geq \mathbb{E}\{ X_t+V_{t+1}(\tilde{RP}_{t+1}) \mid\mathcal{F}_t\}.
\end{equation*}
Define
\begin{equation*}
\tilde{Y}_{t+1}:=X_t+V_{t+1}(\tilde{RP}_{t+1}),\;\;\;\; \Delta DP_{t+1}:=V_{t+1}(DP_{t+1})-\mathbb{E}\{V_{t+1}(DP_{t+1}) \mid\mathcal{F}_t\},
\end{equation*}
and assume that, for any $t\le s-1$,
\begin{eqnarray}
\mathbb{E}\left\{\rho\{ V_{s+1}(\tilde{RP}_{s}) -\tilde{Y}_{s+1}-\Delta DP_{s+1}\mid\mathcal{F}_s\}\mid \mathcal{F}_{t+1}\right\}\label{assumpt-xsdsrtindep3}\\
\mbox{ and }R_{t+1}^{(s-t)}\mbox{ are independent conditional on $\mathcal{F}_t$.}\nonumber
\end{eqnarray}
Then we have the upper bound
\begin{equation*}
V_t(\mathcal{L})\leq V_t(\tilde{RP}_t)+\tilde{RM}_t
\end{equation*}
with the "adjusted expected risk margin"
\begin{equation}\label{def-tildermt}
\tilde{RM}_t:=\frac{\eta}{1+\eta}\cdot \sum_{s=t}^{T}\text{pv}_{(s+1\to t)}\left(\mathbb{E}\left\{ \rho\{ V_{s+1}(\tilde{RP}_{s}) -\tilde{Y}_{s+1}-\Delta DP_{s+1}\mid\mathcal{F}_s\}\mid \mathcal{F}_t \right\}\right).
\end{equation}
\end{corollary}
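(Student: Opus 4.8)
The plan is to turn the recursive description of $V_t(DP_t)$ underlying Theorem~\ref{prop-upperboundonvaluedividendpf} into a \emph{self-contained} recursive inequality once the dividend is specialised to $D_t=\eta\, C_t$, and then to iterate that inequality backwards from year $T$ exactly as in the proof of Proposition~\ref{prop-upperboundonvalue}. Fix the year $t$, abbreviate $\pi_s:=V_s(DP_s)$ and $\sigma_s:=\rho\{V_{s+1}(\tilde{RP}_s)-\tilde{Y}_{s+1}-\Delta DP_{s+1}\mid\mathcal{F}_s\}$ (with $\tilde{Y}_{s+1}:=X_s+V_{s+1}(\tilde{RP}_{s+1})$), and note that, since $DP_s$ is a one-year risk-free zero-coupon bond, $V_{s+1}(DP_s)=\text{tv}_{(s\to s+1)}(\pi_s)$ is $\mathcal{F}_s$-measurable while $V_{s+1}(DP_{s+1})=\pi_{s+1}$; hence, by \eqref{eq-ytp1forsplit} and the definition of $\Delta DP_{s+1}$, $Y_{s+1}=\tilde{Y}_{s+1}+\pi_{s+1}=\tilde{Y}_{s+1}+\mathbb{E}\{\pi_{s+1}\mid\mathcal{F}_s\}+\Delta DP_{s+1}$.

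First I would invoke the intermediate inequality \eqref{expr-recboundondptp1} established inside the proof of Theorem~\ref{prop-upperboundonvaluedividendpf}(a), namely $\text{tv}_{(s\to s+1)}(\pi_s)\le\mathbb{E}\{\pi_{s+1}\mid\mathcal{F}_s\}+D_s$; its derivation uses only Lemma~\ref{lemma-valueupperbound}, the split $RP_s=\tilde{RP}_s+DP_s$ and condition \eqref{eq-exporpt}, all of which are hypotheses of the corollary. Next, because $\rho$ is translation-invariant and $V_{s+1}(DP_s)$ is $\mathcal{F}_s$-measurable, the capital \eqref{def-ctinlemma1} rewrites as in \eqref{expr-ctrm}, and a further use of translation-invariance with the $\mathcal{F}_s$-measurable shift $\mathbb{E}\{\pi_{s+1}\mid\mathcal{F}_s\}$ gives $\rho\{V_{s+1}(\tilde{RP}_s)-Y_{s+1}\mid\mathcal{F}_s\}=\sigma_s+\mathbb{E}\{\pi_{s+1}\mid\mathcal{F}_s\}$, so that $C_s=\text{pv}_{(s+1\to s)}(\sigma_s+\mathbb{E}\{\pi_{s+1}\mid\mathcal{F}_s\})-\pi_s$. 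Substituting $D_s=\eta\, C_s$ with this expression into the intermediate inequality and solving the resulting affine inequality for $\pi_s$ yields, writing $p_s:=(1+R_s^{(1)})^{-1}$,
\[
\pi_s\ \le\ p_s\,\mathbb{E}\{\pi_{s+1}\mid\mathcal{F}_s\}\ +\ \frac{\eta\, p_s^{2}}{1+\eta\, p_s}\,\sigma_s .
\]
Using $p_s\le 1$ (non-negative one-year rates) one has $\eta p_s^{2}/(1+\eta p_s)\le \frac{\eta}{1+\eta}\,p_s$, so that, provided $\sigma_s\ge 0$, this collapses to $\pi_s\le \text{pv}_{(s+1\to s)}(\mathbb{E}\{\pi_{s+1}\mid\mathcal{F}_s\}+\frac{\eta}{1+\eta}\sigma_s)$.

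With this recursion and $\pi_{T+1}=V_{T+1}(DP_{T+1})=0$, I would then run the backward induction from $s=T$ down to $s=t$ exactly as in the proof of Proposition~\ref{prop-upperboundonvalue}: apply $\mathbb{E}\{\cdot\mid\mathcal{F}_t\}$ to the inductive hypothesis at $t+1$, pull the $\mathcal{F}_{t+1}$-measurable discount factors out against $\mathbb{E}\{\sigma_s\mid\mathcal{F}_{t+1}\}$ using the conditional independence \eqref{assumpt-xsdsrtindep3}, and then apply \eqref{assumpt-forwardprice2} followed by the no-arbitrage forward relation \eqref{eq-noarbitrageforward} (with term $m=s-t$) to pass from $\text{pv}_{(t+1\to t)}(\mathbb{E}\{\text{pv}_{(s+1\to t+1)}(\sigma_s)\mid\mathcal{F}_t\})$ to $\text{pv}_{(s+1\to t)}(\mathbb{E}\{\sigma_s\mid\mathcal{F}_t\})$ --- a step that again uses $\mathbb{E}\{\sigma_s\mid\mathcal{F}_t\}\ge0$. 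This gives $V_t(DP_t)=\pi_t\le\tilde{RM}_t$ with $\tilde{RM}_t$ as in \eqref{def-tildermt}; adding $V_t(\tilde{RP}_t)$ and using $V_t(\mathcal{L})=V_t(RP_t)=V_t(\tilde{RP}_t)+V_t(DP_t)$ delivers the asserted bound.

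The step I expect to be the main obstacle is the coefficient reduction $\eta p_s^{2}/(1+\eta p_s)\le\frac{\eta}{1+\eta}p_s$ together with the sign condition $\sigma_s\ge 0$ it requires: the first part needs $p_s\le 1$, i.e.\ non-negative one-year risk-free rates, and the second says that the prescribed risk $\rho\{V_{s+1}(\tilde{RP}_s)-\tilde{Y}_{s+1}-\Delta DP_{s+1}\mid\mathcal{F}_s\}$ of the one-year mismatch of the best-estimate portfolio is non-negative. The latter is what one expects for a genuinely non-replicable liability and is consistent with $C_s\ge0$; but since for a general translation-invariant $\rho$ (e.g.\ VaR) it is not literally forced pathwise by the stated hypotheses, it should either be added explicitly or argued from the concrete model. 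Everything else --- the rewriting of $C_s$, the affine-inequality manipulation, and the discounting induction --- is the same bookkeeping already carried out in Theorem~\ref{prop-upperboundonvaluedividendpf} and Proposition~\ref{prop-upperboundonvalue}.
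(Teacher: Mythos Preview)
Your proposal is correct and follows essentially the same route as the paper: both start from the recursive inequality \eqref{expr-recboundondptp1}, substitute $D_s=\eta C_s$ with $C_s$ rewritten via \eqref{expr-ctrm} and translation-invariance, solve the resulting affine inequality for the dividend-portfolio value, relax the coefficient using $R_s^{(1)}\ge 0$ to obtain the factor $\eta/(1+\eta)$, and then iterate backwards as in Proposition~\ref{prop-upperboundonvalue} using \eqref{assumpt-xsdsrtindep3}, \eqref{assumpt-forwardprice2} and \eqref{eq-noarbitrageforward}. The paper performs the algebra at the level of $V_{t+1}(DP_t)$ (obtaining $\eta/(1+R_t^{(1)}+\eta)\le \eta/(1+\eta)$) whereas you work at the level of $\pi_s=V_s(DP_s)$ (obtaining $\eta p_s^{2}/(1+\eta p_s)\le \eta p_s/(1+\eta)$), but these are equivalent rearrangements; your explicit flagging of the implicit side conditions $R_s^{(1)}\ge 0$ and $\sigma_s\ge 0$ is apt, since the paper's step ``$1+\eta\le 1+R_t^{(1)}+\eta$'' silently uses both as well.
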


\begin{proof}
We show that $V_{t}(DP_t)\leq \tilde{RM}_t$ for any $t$. To this end, we note that the assumptions of Theorem~\ref{prop-upperboundonvaluedividendpf} are satisfied, and so we can insert $D_t=\eta\cdot C_t$ and the expression \eqref{expr-ctrm} for the capital $C_t$ into \eqref{expr-recboundondptp1} to get
\begin{equation*}
(1+R_t^{(1)}+\eta)\cdot V_{t+1}(DP_t)\leq (1+R_t^{(1)})\cdot\mathbb{E}\left\{ V_{t+1}(DP_{t+1})\mid \mathcal{F}_{t}\right\}+\eta \cdot \rho\{ V_{t+1}(\tilde{RP}_{t}) -Y_{t+1}\mid\mathcal{F}_t\}.
\end{equation*}
Using that $\rho$ is translation-invariant and in view of the definitions of $\tilde{Y}_{t+1}$ and $\Delta DP_{t+1}$, and with the inequality $1+\eta\leq 1+R_t^{(1)}+\eta$, we can write this as the recursive upper bound
\begin{equation*}
V_{t+1}(DP_t)\leq \mathbb{E}\left\{ V_{t+1}(DP_{t+1})\mid \mathcal{F}_{t}\right\}+\frac{\eta}{1+\eta} \cdot \rho\{ V_{t+1}(\tilde{RP}_{t}) -\tilde{Y}_{t+1}-\Delta DP_{t+1}\mid\mathcal{F}_t\}.
\end{equation*}
Arguing as in the proof of Proposition~\ref{prop-upperboundonvalue} and using \eqref{assumpt-xsdsrtindep3}, the upper bound then follows.
\end{proof}
In practice, it is often assumed as a simplification that "the risk margin does not contribute to the one-year volatility", i.e.
\begin{equation*}
V_{t+1}(DP_{t})\approx V_{t+1}(DP_{t+1})+D_t.
\end{equation*}
This assumption is implicit in the Swiss Solvency Test (SST) and is often assumed in Solvency II when the Solvency Capital Requirement needs to be calculated. In our case, we can formulate the corresponding condition as (for some small $\varepsilon>0$)
\begin{equation*}
\Delta DP_{t+1}=V_{t+1}(DP_{t+1})-\mathbb{E}\{V_{t+1}(DP_{t+1}) \mid\mathcal{F}_t\}\leq \varepsilon\cdot \rho\{ V_{t+1}(\tilde{RP}_{t}) -\tilde{Y}_{t+1}\mid\mathcal{F}_t\}.
\end{equation*}
I.e. the possible increase in the estimate of $V_{t+1}(DP_{t+1})$ from time $t$ to $t+1$ is small compared to the $\rho$-term. If we assume this holds for any $t$ and, in addition, that the risk measure $\rho$ is monotone, then we get
\begin{eqnarray*}
& & \rho\{ V_{t+1}(\tilde{RP}_{t}) -\tilde{Y}_{t+1}-\Delta DP_{t+1}\mid\mathcal{F}_t\}  \\
& \leq & \rho\{ V_{t+1}(\tilde{RP}_{t}) -\tilde{Y}_{t+1}-\varepsilon\cdot \rho\{ V_{t+1}(\tilde{RP}_{t}) -\tilde{Y}_{t+1}\mid\mathcal{F}_t\}\mid\mathcal{F}_t\}\\
& = & (1+ \varepsilon)\cdot \rho\{ V_{t+1}(\tilde{RP}_{t}) -\tilde{Y}_{t+1}\mid\mathcal{F}_t\}
\end{eqnarray*}
using that $\rho$ is translation-invariant. We then get for the "adjusted expected risk margin" $\tilde{RM}_t$ from \eqref{def-tildermt} the upper bound
\begin{equation*}
\tilde{RM}_t\leq \frac{\eta\cdot (1+\varepsilon)}{1+\eta}\cdot \sum_{s=t}^{T}\text{pv}_{(s+1\to t)}\left(\mathbb{E}\left\{ \rho\{ V_{s+1}(\tilde{RP}_{s}) -X_s-V_{s+1}(\tilde{RP}_{s+1})\mid\mathcal{F}_s\}\mid \mathcal{F}_t \right\}\right).
\end{equation*}
This expression for the risk margin appears to be the one implicitly used in most actual calculations in the context of Solvency II (and the SST), but without the $\eta$-term in the denominator of the fraction above and with $\varepsilon$ set equal to zero.

Note that the capital for any year is calculated above in terms of the one-year change of the "best estimate", and not in terms of the one-year change of the difference between the value of the assets covering the "best estimate" and the value of the liability.

\section{An example for the calculation of the value}\label{sec-ex}

We now explicitly calculate the value for a simple example with two liabilities $\mathcal{L}_1$ and $\mathcal{L}_2$. The example illustrates the upper bound from Theorem~\ref{prop-upperboundonvaluedividendpf} and shows that the "ordering" of the value of two liabilities can change: there are instances in which the value of one liability is larger than the other but the inequality is reversed between the upper bounds. We note that the example we present is not realistic as we assume that the risk-free interest rate is zero and that successive cash-flows are independent. However, it may be surprising that the change of the "ordering" occurs even under these assumptions.

We assume that only risk-free zero-coupon bonds are eligible for the replication and that the risk-free interest rate is zero. These two conditions imply that actually only cash is available. We further assume that the capital $C_t$ is given by \eqref{def-ctinlemma1} with the risk measure $\rho$ as in \eqref{def-rhosii} given by the Value-at-Risk at a confidence level $0<\alpha <1$, and that the dividend $D_t$ is given by \eqref{def-dividendtbycoc} with $\eta >0$. Under these assumptions, Theorem~\ref{prop-valueforriskfree} implies that the value $V_t(\mathcal{L})$ of a liability $\mathcal{L}$ is given by
\begin{equation}\label{expr-valueex}
V_t(\mathcal{L})=\frac{1}{1+\eta}\left(\mathbb{E}\{ 1_{A_t}\cdot Y_{t+1}\mid\mathcal{F}_t\}+(1+\eta-\gamma_t)\cdot \rho\{ -Y_{t+1}\mid\mathcal{F}_t\}\right).
\end{equation}
The upper bound on the value according to Theorem~\ref{prop-upperboundonvaluedividendpf} becomes
\begin{eqnarray}
V_t^{u}(\mathcal{L}) & := & \sum_{s=t}^{T} \mathbb{E}\{ X_s\mid\mathcal{F}_t\}+\eta\cdot\sum_{s=t}^{T} \mathbb{E}\{ C_s\mid\mathcal{F}_t\}\label{def-upperboundex}\\
\mbox{with }C_s & = & \rho\{ -Y_{s+1}\mid\mathcal{F}_s\}-V_s(\mathcal{L})\label{expr-capitalex} ,
\end{eqnarray}
where $X_t$ denotes the claims payment of the insurance liability $\mathcal{L}$ in year $t$. $X_t$ is $\mathcal{F}_{t+1}$-measurable, and we assume in addition that its distribution conditional on $\mathcal{F}_s$ for any $0\leq s<t+1$ is independent of $s$, i.e. no information about $X_t$ is revealed before time $t+1$. In particular, the $X_t$ are independent.

We consider the two years $t=0$ and $t=1$ and assume $X_t=0$ for $t\geq 2$. We suppress conditioning on $\mathcal{F}_0$ in the notation, and then get from Theorem~\ref{prop-valueforriskfree} and the above assumption on $X_1$ that
\begin{eqnarray*}
\mathbb{E}\{ 1_{A_1}\cdot Y_{2}\mid\mathcal{F}_1\} & = & \mathbb{E}\{ 1_{\{X_1\leq \rho\{-X_1\mid\mathcal{F}_1\}\}}\cdot X_{1}\mid\mathcal{F}_1\}=\mathbb{E}\{ 1_{\{X_1\leq \rho\{-X_1\}\}}\cdot X_{1}\}\\
\rho\{ -Y_{2}\mid\mathcal{F}_1\} & = & \rho\{ -X_{1}\}.
\end{eqnarray*}
Hence the value at time $t=1$ from \eqref{expr-valueex} is
\begin{equation*}
V_1(\mathcal{L})=\frac{1}{1+\eta}\left(\mathbb{E}\{ 1_{\{X_1\leq \rho\{-X_1\}\}}\cdot X_{1}\}+(1+\eta-\gamma_1)\cdot \rho\{ -X_{1}\}\right).
\end{equation*}
Since $V_1(\mathcal{L})$ thus is $\mathcal{F}_0$-measurable, and $\rho$ is translation-invariant, we get from this and \eqref{expr-valueex} for the value at time $t=0$ that
\begin{eqnarray}
V_0(\mathcal{L}) & = & \frac{1}{1+\eta}\left(\mathbb{E}\{ 1_{\{X_0\leq \rho\{-X_0\}\}}\cdot X_{0}\}+(1+\eta-\gamma_0)\cdot \rho\{ -X_{0}\}+(1+\eta)\cdot V_1(\mathcal{L})\right)\nonumber\\
& = & \frac{1}{1+\eta}\left(\mathbb{E}\{ 1_{\{X_0\leq \rho\{-X_0\}\}}\cdot X_{0}\}+\mathbb{E}\{ 1_{\{X_1\leq \rho\{-X_1\}\}}\cdot X_{1}\}\right)\nonumber\\
& + & \frac{1}{1+\eta}\left((1+\eta-\gamma_0)\cdot \rho\{ -X_{0}\}+(1+\eta-\gamma_1)\cdot \rho\{ -X_{1}\}\right).\nonumber
\end{eqnarray}
With $\rho$ given by the Value-at-Risk at confidence level $0<\alpha <1$, so $\gamma_t=\alpha$, and for a normally distributed random variable $Z$ with mean $\mu$ and standard deviation $\sigma$, we have
\begin{eqnarray*}
\rho \{-Z\} & = & VaR_{\alpha}(Z)=\mu+\sigma\cdot q_{\alpha}\\
\mathbb{E}\{ 1_{\{Z\leq \rho(-Z)\}}\cdot Z\} & = & \mathbb{E}\{ Z\}-(1-\alpha)\cdot \mathbb{E}\{ Z\mid Z> \rho(-Z)\}=\\
& = & \mu-(1-\alpha)\cdot \left(\mu+\frac{\phi (q_{\alpha})}{1-\alpha}\cdot\sigma\right)= \alpha\cdot \mu -\phi (q_{\alpha})\cdot\sigma ,
\end{eqnarray*}
where $\phi$ denotes the probability density function and $q_{\alpha}$ the ${\alpha}$-quantile of the standard normal distribution.

Assume that $X_t$ for $t=0,1$ are normally distributed with mean $\mu_t$ and standard deviation $\sigma_t$. Then the above expressions imply that the values at times $t=0,1$ become
\begin{eqnarray}
V_1(\mathcal{L}) & = & \mu_1+ \frac{\sigma_1}{1+\eta}\cdot ((1+\eta-\alpha)\cdot q_{\alpha} -\phi (q_{\alpha})),\nonumber\\
V_0(\mathcal{L}) & = & \mu_0+\mu_1+\frac{\sigma_0+\sigma_1}{1+\eta}\cdot ((1+\eta-\alpha)\cdot q_{\alpha} -\phi (q_{\alpha})).\label{expr-valueexcalc}
\end{eqnarray}
Inserting this into the expression for the capital amounts $C_0$ and $C_1$ from \eqref{expr-capitalex}, we get
\begin{eqnarray*}
C_0+C_1 & = & \rho \{-X_0-V_1(\mathcal{L})\}-V_0(\mathcal{L})+\rho \{-X_1\}-V_1(\mathcal{L})=\rho \{-X_0\}+\rho \{-X_1\}-V_0(\mathcal{L})\\
& = & \frac{\sigma_0+\sigma_1}{1+\eta}\cdot (\alpha\cdot q_{\alpha} +\phi (q_{\alpha})),
\end{eqnarray*}
hence the upper bound from \eqref{def-upperboundex} on the value at $t=0$ becomes
\begin{equation}\label{expr-upperboundex}
V_0^{u}(\mathcal{L}) = \mu_0+\mu_1+\frac{\eta\cdot (\sigma_0+\sigma_1)}{1+\eta}\cdot (\alpha\cdot q_{\alpha} +\phi (q_{\alpha})).
\end{equation}
Now consider two different liabilities: for the first liability $\mathcal{L}^{(1)}$, we assume as above that $X_t$ for $t=0,1$ is normally distributed with mean $\mu_t>0$ and standard deviation $\sigma_t>0$. So the value $V_0(\mathcal{L}^{(1)})$ and the upper bound $V_0^{u}(\mathcal{L}^{(1)})$ are given as in \eqref{expr-valueexcalc} and \eqref{expr-upperboundex}, respectively. The second liability $\mathcal{L}^{(2)}$ we define by $X_1=0$ and $X_0$ normally distributed with mean $\mu_0+\mu_1$ and standard deviation $(\sigma_0^2+\sigma_1^2)^{\frac{1}{2}}$. Its value $V_0(\mathcal{L}^{(2)})$ and upper bound $V_0^{u}(\mathcal{L}^{(2)})$ are then given as in \eqref{expr-valueexcalc} and \eqref{expr-upperboundex}, respectively, but with $\sigma_0+\sigma_1$ replaced by $(\sigma_0^2+\sigma_1^2)^{\frac{1}{2}}$. We get:
\begin{prop}\label{prop-example1}
Let $\mathcal{L}^{(1)}$ and $\mathcal{L}^{(2)}$ be defined as above.
\begin{itemize}
\item[(a)] The value and upper bounds of the value at time $t=0$ satisfy the inequalities
\begin{equation*}
V_0(\mathcal{L}^{(i)})< V_0^{u}(\mathcal{L}^{(i)})\;\mbox{ for }i=1,2,\;\;\; V_0^{u}(\mathcal{L}^{(2)})< V_0^{u}(\mathcal{L}^{(1)}).
\end{equation*}
\item[(b)] For any $\eta >0$ sufficiently small,
\begin{equation*}
V_0(\mathcal{L}^{(1)})< V_0(\mathcal{L}^{(2)}).
\end{equation*}
\end{itemize}
\end{prop}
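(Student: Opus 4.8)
The plan is to reduce everything to two scalar sign conditions together with one elementary Gaussian tail estimate.

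First I would restate the explicit formulas. Writing $S_1:=\sigma_0+\sigma_1$ and $S_2:=(\sigma_0^2+\sigma_1^2)^{1/2}$, we have by \eqref{expr-valueexcalc}, \eqref{expr-upperboundex} and the sentence immediately preceding Proposition~\ref{prop-example1}, for $i=1,2$,
\begin{eqnarray*}
V_0(\mathcal{L}^{(i)}) & = & \mu_0+\mu_1+\frac{S_i}{1+\eta}\bigl((1+\eta-\alpha)q_{\alpha}-\phi(q_{\alpha})\bigr),\\
V_0^{u}(\mathcal{L}^{(i)}) & = & \mu_0+\mu_1+\frac{\eta\,S_i}{1+\eta}\bigl(\alpha q_{\alpha}+\phi(q_{\alpha})\bigr).
\end{eqnarray*}
Since $\sigma_0,\sigma_1>0$, expanding the square gives $S_1^2=S_2^2+2\sigma_0\sigma_1>S_2^2$, hence $S_1>S_2>0$. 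Thus each of the four claimed inequalities becomes, after forming the relevant difference and cancelling, a statement about the signs of
\begin{equation*}
a:=\phi(q_{\alpha})-(1-\alpha)q_{\alpha}\qquad\text{and}\qquad b:=\phi(q_{\alpha})+\alpha q_{\alpha}.
\end{equation*}

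The one ingredient that is not pure bookkeeping is the tail bound $x\bigl(1-\Phi(x)\bigr)<\phi(x)$ for $x>0$, where $\Phi$ is the standard normal distribution function; this follows from $1-\Phi(x)=\int_x^{\infty}\phi(t)\,dt<\int_x^{\infty}(t/x)\phi(t)\,dt=\phi(x)/x$, using $\phi'(t)=-t\phi(t)$. I would then deduce $a>0$ and $b>0$ for every $\alpha\in(0,1)$ by a case split on the sign of $q_{\alpha}$ (equivalently, of $\alpha-1/2$): if $q_{\alpha}\le 0$ then $a>0$ is immediate, while if $q_{\alpha}>0$ then $1-\alpha=1-\Phi(q_{\alpha})$ and the tail bound gives $(1-\alpha)q_{\alpha}<\phi(q_{\alpha})$, i.e.\ $a>0$; for $b$, if $q_{\alpha}\ge 0$ then $b>0$ trivially, whereas if $q_{\alpha}<0$ one applies the tail bound at $x=-q_{\alpha}$ and uses $\phi(-x)=\phi(x)$, $\Phi(-x)=1-\Phi(x)$ to get $-\alpha q_{\alpha}<\phi(q_{\alpha})$, i.e.\ $b>0$.

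For part (a): a direct computation of $V_0^{u}(\mathcal{L}^{(i)})-V_0(\mathcal{L}^{(i)})$ makes the $\eta$-dependence cancel, leaving $V_0^{u}(\mathcal{L}^{(i)})-V_0(\mathcal{L}^{(i)})=S_i\,a>0$, which is the first inequality; and $V_0^{u}(\mathcal{L}^{(1)})-V_0^{u}(\mathcal{L}^{(2)})=\frac{\eta(S_1-S_2)}{1+\eta}\,b>0$ since $\eta>0$, $S_1>S_2$ and $b>0$, which is the second. For part (b): $V_0(\mathcal{L}^{(2)})-V_0(\mathcal{L}^{(1)})=\frac{S_2-S_1}{1+\eta}\bigl((1+\eta-\alpha)q_{\alpha}-\phi(q_{\alpha})\bigr)$ with $S_2-S_1<0$, so it suffices that the bracket $(1+\eta-\alpha)q_{\alpha}-\phi(q_{\alpha})=-a+\eta q_{\alpha}$ be negative; it equals $-a<0$ at $\eta=0$ and is affine in $\eta$, hence stays strictly negative on an interval $0<\eta<\eta_0$ with $\eta_0>0$ (one may take $\eta_0=+\infty$ if $q_{\alpha}\le 0$ and $\eta_0=a/q_{\alpha}$ if $q_{\alpha}>0$). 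This gives $V_0(\mathcal{L}^{(1)})<V_0(\mathcal{L}^{(2)})$ for all sufficiently small $\eta>0$.

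I do not expect a genuine obstacle: once \eqref{expr-valueexcalc}–\eqref{expr-upperboundex} are available, the argument is essentially a one-page computation. The only points requiring a little care are checking that the $\eta$-terms really cancel in the first inequality of (a), and establishing the positivity of $a$ and $b$, which is where the Mills-type tail bound and the case distinction on the sign of $q_{\alpha}$ come in.
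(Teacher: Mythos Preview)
Your proof is correct and follows the same overall structure as the paper: reduce all four inequalities to the positivity of the two scalars $a=\phi(q_{\alpha})-(1-\alpha)q_{\alpha}$ and $b=\phi(q_{\alpha})+\alpha q_{\alpha}$ (the paper calls them $f(\alpha)$ and $g(\alpha)$), together with $S_1>S_2$, and then handle (b) by noting that the bracket $(1+\eta-\alpha)q_{\alpha}-\phi(q_{\alpha})=-a+\eta q_{\alpha}$ is negative at $\eta=0$ and continuous in $\eta$.

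The one genuine methodological difference is how you establish $a>0$ and $b>0$. The paper differentiates: using $\frac{d}{d\alpha}q_{\alpha}=\phi(q_{\alpha})^{-1}$ it shows $f'(\alpha)=-(1-\alpha)/\phi(q_{\alpha})<0$, so $f$ is strictly decreasing on $(0,1)$ with $f(\alpha)\to 0$ as $\alpha\to 1^-$, forcing $f>0$; the argument for $g$ is analogous. You instead invoke the Mills-type tail bound $x(1-\Phi(x))<\phi(x)$ for $x>0$ and split cases on the sign of $q_{\alpha}$. Your route is slightly more elementary in that it avoids differentiating the quantile function, and it isolates exactly the inequality needed; the paper's route yields the extra information that $f$ is monotone, which is not required for the proposition but is a nice structural fact. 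Either argument is perfectly adequate here.
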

\begin{proof}
To prove \emph{(a)}, the first statement (as a non-strict inequality) follows from Theorem~\ref{prop-upperboundonvaluedividendpf}, but we show it here explicitly by observing that the relevant difference, which has to be shown to be positive, can be written
\begin{equation*}
\eta\cdot (\alpha\cdot q_{\alpha} +\phi (q_{\alpha}))-((1+\eta-\alpha)\cdot q_{\alpha}-\phi (q_{\alpha}))=(1+\eta)\cdot f(\alpha),
\end{equation*}
where $f$ is defined by
\begin{equation*}
f(\alpha):=\phi (q_{\alpha})-(1-\alpha)\cdot q_{\alpha}.
\end{equation*}
$f(\alpha)$ is strictly positive for any $0<\alpha <1$. This follows since it clearly holds for $\alpha>0$ close to $0$, and $f$ is strictly monotonously decreasing as
\begin{equation*}
f^{\prime}(\alpha)=-(1-\alpha)\cdot \phi (q_{\alpha})^{-1}<0\;\mbox{ for }0<\alpha <1
\end{equation*}
which follows from
\begin{equation*}
\frac{d}{d\alpha}q_{\alpha}=\phi (q_{\alpha})^{-1},\;\;\; \frac{d}{dx}\phi (x)=-x\cdot \phi (x)
\end{equation*}
and $f(\alpha)\to 0$ for $\alpha\to 1-$. The second statement holds as $(\sigma_0^2+\sigma_1^2)^{\frac{1}{2}}< \sigma_0+\sigma_1$ and
\begin{equation*}
g(\alpha):=\alpha\cdot q_{\alpha} +\phi (q_{\alpha})> 0\;\mbox{ for }0<\alpha <1
\end{equation*}
which can be shown similarly to the statement on $f$.

\emph{(b)} follows from $(\sigma_0^2+\sigma_1^2)^{\frac{1}{2}}< \sigma_0+\sigma_1$ and the strict positivity of $f$ shown in the proof of (a), which implies that, for fixed $\alpha$, we have for $\eta >0$ sufficiently small,
\begin{equation}\label{equ-etaalphaphi}
(1+\eta-\alpha)\cdot q_{\alpha}-\phi (q_{\alpha}) < 0.
\end{equation}
\end{proof}
The inequality between the upper bounds from Proposition~\ref{prop-example1} (a) is not surprising: the total loss $X_1+X_2$ is the same for the two liabilities, but for the first liability $\mathcal{L}^{(1)}$, the total loss is distributed over two years and so (because no information about $X_1$ is revealed before time $t=2$) does not allow for taking into account diversification between $X_0$ and $X_1$. The inequality between the values from Proposition~\ref{prop-example1} (b) goes into the opposite direction for $\eta$ small. In fact, it follows from \eqref{equ-etaalphaphi} that, the smaller the safety level $\alpha$, i.e. the larger the probability that "limited liability" applies, the larger the cost of capital rate $\eta$ can be such that the inequality still holds.

\section*{Conclusions}

We have presented a proposal for a framework for market-consistent valuation of insurance liabilities which incorporates the Solvency II approach as a special case. We have shown that a value exists under certain conditions, derived upper bounds on such values, and shown that there exist a unique value defined by an explicit recursive expression if we restrict replication to risk-free zero-coupon bonds. The question remains open whether the value is unique in the general case.

Further, we have shown that the representation of the value as a sum of best estimate and risk margin is a simplification, which under certain conditions provides an upper bound on the value. By an explicit example, we have calculated the value as well as the upper bound, and have shown that the expression for the upper bound does not always preserve the order of the value.

\newpage

\bibliographystyle{plain}
\bibliography{MyBib}

\newpage

Author:\\
Christoph M\"ohr\\
Insurance Risk Management\\
Deloitte AG\\
General-Guisan-Quai 38\\
CH-8002 Zurich, Switzerland\\
Email: cmoehr@deloitte.ch

\end{document}